\newcommand{\cD}{\mathcal{D}}
\newcommand{\bbR}{\mathbb{R}}
\newcommand{\Ex}{\mathbb{E}}
\newcommand{\eps}{\varepsilon}
\newcommand{\down}[1]{\left\lfloor #1\right\rfloor}
\newcommand{\floor}[1]{\left\lfloor #1\right\rfloor}
\newcommand{\IGNORE}[1]{}
\newcommand{\A}{\mathsf{A}}
\newcommand{\ind}[1]{\mathbbm{1}\{#1\}}
\newcommand{\hatf}{\hat{f}}
\newcommand{\hatr}{\hat{r}}
\newcommand{\cA}{\mathcal{A}}
\newcommand{\cM}{\mathcal{M}}
\newcommand{\cL}{\mathcal{L}}
\newcommand{\vecs}{{s}}
\newcommand{\sfA}{\mathsf{A}}
\newcommand{\Ag}{\mathsf{A}}
\newcommand{\pc}{\mathsf{P}}
\renewcommand{\P}{\mathsf{P}}
\newcommand{\cY}{\mathcal{Y}}
\newcommand{\OPT}{\mathtt{OPT}}
\newcommand{\APX}{\mathtt{APX}}
\newcommand{\LP}{\mathtt{LP}}
\newcommand{\hatc}{\hat{c}}
\newcommand{\C}{\mathcal{C}}
\newcommand{\hatG}{\hat{G}}
\newcommand{\Y}{\mathcal{Y}}
\newcommand{\Deltas}{\Delta s}
\newcommand{\UCB}{\mathtt{UCB}}
\newcommand{\LCB}{\mathtt{LCB}}
\renewcommand{\bar}[1]{\widebar{#1}}
\newcommand{\Reg}{\mathtt{Reg}}
\renewcommand{\i}{ {(i)} }
\renewcommand{\j}{ {(j)} }
\renewcommand{\k}{ {(k)} }
\newcommand{\paren}[1]{\left( #1 \right)}
\renewcommand{\tilde}[1]{\widetilde{#1}}
\let\alpha\upalpha
\newtheorem{theorem}{Theorem}
\newtheorem{lemma}{Lemma}
\newtheorem{proposition}{Proposition}
\newtheorem{corollary}{Corollary}
\newtheorem{remark}{Remark}
\newtheorem{assumption}{Assumption}
\newtheorem{definition}{Definition}
\newtheorem*{lemma*}{Lemma}
\newcommand{\set}[1]{\{ #1 \} }
\newcommand{\Ber}{\mathtt{Ber}}
\renewcommand{\bar}[1]{\widebar{#1}}
\title{Harnessing the Continuous Structure: Utilizing the First-order Approach in Online Contract Design}
\author{
  Shiliang Zuo\thanks{szuo3@illinois.edu}
}
\date{} % No date
\begin{document}

\maketitle

\begin{abstract}
  
This work studies the online contract design problem. The principal's goal is to learn the optimal contract that maximizes her utility through repeated interactions, without prior knowledge of the agent's type (i.e., the agent's cost and production functions). We leverage the structure provided by continuous action spaces, which allows the application of first-order conditions (FOC) to characterize the agent's behavior. In some cases, we utilize conditions from the first-order approach (FOA) in economics, but in certain settings we are able to apply FOC without additional assumptions, leading to simpler and more principled algorithms. 

We illustrate this approach in three problem settings. Firstly, we study the problem of learning the optimal contract when there can be many outcomes. In contrast to prior works that design highly specialized algorithms, we show that the problem can be directly reduced to Lipschitz bandits. Secondly, we study the problem of learning linear contracts. While the contracting problem involves hidden action (moral hazard) and pricing problem involves hidden value (adverse selection), the two problem share a similar optimization structure, which enables direct reduction between the problem of learning linear contracts and dynamic pricing. Thirdly, we study the problem of learning contracts with many outcomes when agents are identical and provide an algorithm with polynomial sample complexity. 

\end{abstract}

\section{Introduction}

Contract theory has long been a fundamental area of economic research, originating with seminal works such as \cite{holmstrom1979moral}. At its core, contract design concerns how a principal incentivizes an agent to perform tasks on her behalf when the agent’s effort is unobservable. Since the principal cannot directly enforce effort, contracts must be structured to align incentives through performance-based compensation. This problem is central to many real-world applications, such as employee bonuses tied to performance, commission-based compensation in sales and real estate, and incentive structures in digital platforms and gig economies.

Traditional economic research has primarily focused on characterizing optimal contracts under the assumption that the principal has full knowledge of the agent’s production and cost functions (see, e.g., \cite{bolton2004contract}). However, this assumption is often unrealistic. In many practical settings, the principal does not have precise information about how an agent's effort translates into outcomes or the exact cost structure faced by the agent. More recent work \cite{ho2014adaptive} has explored the problem of learning optimal contracts when these functions are unknown, introducing an online learning perspective to contract design.

The repeated interactions between a principal and agents naturally suggest an online learning framework, where the principal iteratively refines contracts based on observed outcomes \cite{ho2014adaptive, Zhu2022TheSC}. While online learning has successfully addressed foundational problems such as Lipschitz bandits and dynamic pricing, contract design is often treated as a separate, more complex domain. Instead of adapting existing techniques, research has largely focused on designing problem-specific algorithms, missing potential structural connections between contract design and well-studied learning frameworks.

We argue that this divergence arises not from inherent complexities in contract design but from a lack of emphasis on the underlying economic structures that could bridge these domains. A key observation is that prior work has largely overlooked the structural advantages provided by continuous action spaces. These spaces introduce natural properties that can simplify the design and analysis of learning algorithms, offering opportunities for more efficient and generalizable solutions. 

This paper investigates the following key question:

\begin{center}
\emph{When we adopt continuous action spaces and the first-order condition (FOC) becomes applicable, what does this reveal about the design of learning algorithms in online learning settings for contract design?}
\end{center}

Our approach is motivated by the observation that adopting continuous action spaces naturally leads to the applicability of the first-order condition (FOC). In economics, the first-order approach (FOA) has been extensively studied as a method for characterizing optimal contracts. This approach relaxes the agent's maximization problem and only requires the agent's response to be a stationary point, rather than the global maximum. The foundational work of Mirrlees (\cite{mirrlees1999theory}, circulated earlier in 1975) demonstrated that the FOA is not always valid, which has led to a substantial body of research examining conditions validating the FOA.

However, from an online learning perspective, the implications of continuous action spaces and the FOC are less explored. In this work, we examine how the presence of the FOC—whether it holds under the conditions developed by the line of work in economics or emerges naturally in certain continuous settings—affects algorithm design for contract learning. We explore whether the structural properties introduced by continuous action spaces enable reductions to well-studied learning paradigms, or can lead to simpler or more principled algorithms.

By identifying and leveraging these structural properties, we aim to bridge the gap between economic theory and online learning, providing a unified perspective on contract design in dynamic environments. Our findings suggest that when the FOC applies, many online contract learning problems exhibit inherent structures that can be systematically exploited, leading to more principled algorithms with improved efficiency and theoretical guarantees.

\subsection{Summary of Results}

\paragraph{Learning Contracts with Many Outcomes for Hetergeneous Agents}

In \Cref{sec:manyOutcomes}, we examine learning algorithms for the principal’s problem of identifying an optimal contract scheme in scenarios with many possible outcomes and heterogeneous agent types. This setting, explored in \cite{ho2014adaptive} and \cite{Zhu2022TheSC}, frames the task as a continuum-armed bandit problem, where the posted contract serves as the \textit{arm}, and the principal observes the realized reward of the selected \textit{arm} in each round. \cite{ho2014adaptive} modified the zooming algorithm for Lipschitz bandits, while \cite{Zhu2022TheSC} introduced a discretization method based on spherical codes.

Despite the structural similarity between online contract design and continuum-armed Lipschitz bandits, these previous works treat the online contract design problem as distinct and propose specialized algorithms. By contrast, we demonstrate that employing the first-order approach allows the online contract design problem to be directly reduced to Lipschitz bandits, thereby enabling a more closer connection between online learning and contract design. 
% under a relatively mild condition, the principal's expected utility, as a function of the contract, becomes (Lipschitz) continuous. Subsequently, the online contract design problem can be directly reduced to Lipschitz bandits. The approach here is based on the first-order approach in economics (e.g. \cite{rogerson1985first}). 

\paragraph{Learning Contracts with Binary Outcomes (Linear Contracts)}

In \Cref{sec:linear}, we establish a direct equivalence between linear contracts (“contracting”) and posted-price auctions (“pricing”). The contracting problem involves designing optimal linear contracts for agents with hidden actions, while the pricing problem focuses on setting fixed prices for buyers with hidden private valuations \cite{kleinberg2003value}. Although different from an incentive standpoint: contracting addresses moral hazard and pricing deals with adverse selection, we show that both problems share a fundamentally similar optimization structure.

This equivalence allows reductions between the two problems, offering a unified approach. Unlike recent studies such as \cite{dutting2023optimal}, which treat learning optimal linear contracts as a specialized one-sided Lipschitz problem, we leverage first-order conditions and align the problem of contracting with dynamic pricing, enabling more general and efficient solutions.

\paragraph{Learning Contracts with Many Outcomes for Identical Agents}

In \Cref{sec:identical}, we analyze the sample complexity of learning an approximately optimal contract for agents with identical types. This problem has been explored in recent work \cite{chen2024bounded}, which shares some conceptual similarities with our approach. Both methodologies adopt a two-stage framework: in the first stage, the principal learns the agent's private production and cost functions; in the second stage, the principal uses a family of linear programs to identify an approximately optimal contract. However, our algorithm is significantly simpler, relies on more natural assumptions, and achieves superior sample complexity. Our algorithm takes advantage of first-order conditions as well as Grossman and Hart's approach (\cite{grossman1992analysis}), and we believe it to be a more systematic and principled sampling scheme than that of \cite{chen2024bounded}. \footnote{Additionally, \cite{chen2024bounded} assumes CDFP, but for risk-neutral agents, by prior works (e.g. \cite{oyer2000theory}) it is known that the optimal contract often simplifies to a threshold-based bonus contract under CDFP. It is not clear to the author the necessity of their approach given their assumptions, since their assumption already implies some characterization on the optimal shape of the contract. }

\subsection{Organization of paper}
The rest of the paper is organized as follows.  \Cref{sec:model} describes the model and problem formulation. \Cref{sec:manyOutcomes} studies the problem of learning general contracts (when agents can have possibly hetergeneous types). \Cref{sec:linear} specializes the approach in to linear contracts, and yield a connection to the problem of dynamic pricing. \Cref{sec:identical} studies the problem of learning general contracts when agents are identical. \Cref{sec:proofLipschitz} and \Cref{sec:proofsketchIdentical} contains proof or proof sketches of two main results. Finally, \Cref{sec:conclusion} concludes.

% This is in contrast to the \emph{discrete action space model} which many recent works on contract design from the computer science community have chosen to adopt. While the discrete action space model certainly seems the natural choice when studying contract design from a \emph{computational complexity} perspective (e.g. \cite{dutting2019simple, dutting2021complexity, dutting2022combinatorial, dutting2023multi}), it is puzzling to the author as to why recent papers studying contract design from a \emph{machine learning} perspective (e.g., \cite{ho2014adaptive, Zhu2022TheSC}) have also chosen to adopt this discrete action space model. 

\subsection{Additional Related Works}
The problem of contract design is a fundamental topic in economics, dating back to the seminal work of \cite{holmstrom1979moral}, which introduced key concepts related to moral hazard and incentive compatibility. Some subsequent influential studies include \cite{holmstrom1982moral}, which examined moral hazard in team production settings, and \cite{holmstrom1991multitask}, which explored moral hazard in multitask environments. 

The first-order approach has been a widely used method for characterizing optimal contracts, particularly due to its simplicity in handling incentive compatibility constraints. However, \cite{mirrlees1999theory} (circulated in 1975) highlighted critical limitations of this approach, sparking a line of research aimed at identifying conditions under which the first-order approach remains valid. Key contributions in this area include \cite{rogerson1985first} and \cite{jewitt1988justifying}, which established foundational conditions for its applicability, as well as more recent works such as \cite{jung2015information} and \cite{conlon2009two}. Additionally, \cite{grossman1992analysis} proposed an alternative framework that addresses some of the first-order approach's shortcomings.

In recent years, there has been growing interest in studying contract design from computational and learning perspectives within the computer science community. Research by \cite{dutting2019simple, dutting2022combinatorial, dutting2023multi, dutting2023optimal} focuses on combinatorial algorithms for contract design, exploring the complexity and structure of optimal contracts in discrete settings. Complementing this, works such as \cite{ho2014adaptive, Zhu2022TheSC, guruganesh2024contracting, collina2024repeated} investigate learning problems in contract design, applying online learning and algorithmic techniques to dynamic principal-agent models. 

\section{The Model and Preliminaries}
\label{sec:model}
% \subsection{Linear Contracts with Binary Outcomes}
% The principal contracts with an agent to complete a task. The task have a binary outcome $\set{0,1}$, where $0$ indicates failure and $1$ indicates success. The agent can supply an effort $a$, under which the success probability is $a$, and the agent's private cost is $c(a)$. 

% The principal can post a linear contract parameterized by $\beta \in [0,1]$. Under the contract $\beta$, the agent is paid $\beta$ when the task is a success. Hence, when both parties are risk-neutral and when the contract is $\beta$ and the agent responds with $a$, the expected utility of the agent is
% \[
% \beta - c(a), 
% \]
% and the expected utility of the principal is
% \[
% (1 - \beta) a. 
% \]

% The general principal-agent problem is as follows. An agent performs some task for the principal. The action set available to the agent is $\cA$, and the outcome space is $\cY$. The production function $f: \cA\rightarrow \Delta(\cY)$ specifies the probability of achieving outcome $\pi\in\cY$ when the agent takes action $a\in \cA$. The principal posts a contract $s: \cY \rightarrow \bbR^+$, a mapping from outcomes to payments. In this work, we assume both parties are risk-neutral. 

% Assume $\cY$ is discrete and $f$ is a probability mass function. The best response of the agent under contract $s$ is captured by:
% \[
% a(s) = \arg\max_{a\in \cA} \sum_{y \in \cY} y f_j(a) - c(a). 
% \]

% The principal wishes to optimize her own utility, 
% \begin{align*}
% u(s) &= \max_{s} \sum_{j\in \cY} [ \pi_j - s_j ]\cdot f_j(a)  \\
% &\text{s.t. } a = a(s). 
% \end{align*}
In this section, we first introduce the basic framework of the principal-agent model, and then introduce this problem in the online learning setting. 

\subsection{Basic Components in Principal-Agent Framework}

We consider a principal-agent setting where both parties are risk-neutral. The principal offers a contract to incentivize the agent to exert effort. The agent, in response, makes decisions based on the contract structure and their private cost functions. The problem instance is characterized by the following parameters:

\begin{itemize}

\item $\cA = [0, A_{\max}]$ represents the set of actions available to the agent, corresponding to the level of effort exerted in the underlying task. Higher values indicate greater effort. The agent's effort level directly influences the probability distribution over potential outcomes, governed by the production technology, which in turn determines the principal’s utility.

\item $c: \cA \rightarrow \bbR$ denotes the agent’s cost function, where $c(a)$ specifies the cost associated with exerting effort $a$. The function $c$ is increasing and convex, with $c(0) = 0$, reflecting the economic intuition that exerting greater effort is more costly and that the marginal cost may increase with higher effort levels.

\item $\cY = \set{\pi_1, \dots, \pi_m}$ defines the production outcome space, consisting of $m$ possible discrete outcomes. The $j$-th outcome level is denoted by $\pi_j$, with $0 = \pi_1 < \pi_2 < \dots < \pi_m \leq 1$. Each $\pi_j$ represents the utility received by the principal when the corresponding outcome is realized. 

\item $f: \cA \rightarrow \Delta(\cY)$ describes the agent’s production technology, mapping effort choices to probability distributions over outcomes. Specifically, $f_j(a)$ gives the probability of achieving outcome level $j$ when the agent exerts effort $a$. This function includes the uncertainty inherent in the production process, where higher effort generally increases the likelihood of favorable outcomes but does not guarantee them.

\end{itemize}

\paragraph{Contracts}

A contract defines a mapping from outcomes to payments. We represent the contract as $\vecs: \cY \rightarrow \bbR^+$, where $\vecs$ specifies the payment assigned to each outcome. Importantly, the agent is protected by limited liability, ensuring that payments remain non-negative.

In most cases, we focus on contracts that are both monotone and bounded, as formalized below.

\begin{definition} [Monotone Contracts]

A contract $s$ is monotone if $s_{j+1} \ge s_j$ for any $1 \le j \le m-1$.

\end{definition}

Monotonicity is a natural restriction, as it prevents the agent from sabotaging the outcome (or prevents the percussions when the principal can participate in costless borrowing). This has already been noted in the work of \cite{innes1990limited}. 

\begin{definition} [Bounded Contracts]

A contract $s$ is bounded if $s_j \le 1$ for all $j \in [m]$.

\end{definition}

While the optimal contract is not necessarily bounded, placing an upper limit on payments is often practical. Bounded contracts reflect real-world constraints, such as budget limitations or regulatory requirements. For an in-depth discussion on optimal contracts under bounded payments, see \cite{jewitt2008moral}. Throughout this work, unless stated otherwise, the term \textit{optimal contract} refers to a contract that is both monotone and bounded.

A special class of contracts that we frequently consider are linear contracts, which offer a simple and widely used compensation structure.

\begin{definition}[Linear Contracts]

A linear contract is parameterized by a single value $\beta \geq 0$. Under a linear contract with parameter $\beta$, the principal transfers a fixed proportion $\beta$ of the total utility to the agent. That is, when the $j$-th outcome occurs and the principal's benefit is $\pi_j$, the agent receives a payment of $\beta \pi_j$.

\end{definition}

% Linear contracts are particularly attractive due to their simplicity and ease of implementation. They are commonly used in practice, such as in commission-based sales or revenue-sharing agreements. %Additionally, linear contracts provide a direct way to align incentives without requiring complex outcome-dependent compensation structures.

% We will define $G_{k+1}(a) = 0$, then clearly by definition: 
% % \begin{definition}
% % Define the complementary distribution function $G_j(a)$ as:
% % \begin{align*}
% % G_j(a) = \sum_{k\ge j} f_j(a). 
% % \end{align*}
% % \end{definition}
% \[
% f_j(a) = G_j(a) - G_{j+1}(a), \quad \forall 1\le j \le m. 
% \]

\paragraph{Utilities} We will write $u_\A(a; \vecs)$ and $u_\P(a; \vecs)$ to represent respectively the expected utility of the agent and principal when the contract is specified by $\vecs$ and the agent takes action $a$. The expected utility of the agent and principal in this model can then be expressed as: 
\[
u_\Ag(a; s) = \sum_{j\in [m]} s_j f_j(a) - c(a), 
\]
\[
u_\pc (a; s) = \sum_{j\in [m]} (\pi_j - s_j) f_j(a). 
\]

For simplicity, we assume the agent's outside option is 0, so that it will not hurt for the agent to participate as long as he is protected by limited liability. In a single round of interaction, the principal posts a contract $s$, and the agent takes an action $a$ satisfying the incentive compatibility constraint 
\[
u_\A(a;s) = \max_{a'\in \cA} u_\A(a';s),
\]
i.e., action $a$ gives the agent the highest expected utility among all actions. For a posted contract $s$, define $u_\P(s)$ as the expected utility to the principal assuming the agent best responds with an incentive compatible action $a$:
\[
u_\P(s) = \max\set{u_\P(a;s) : u_\A(a;s) \ge u_\A(a'; s) }. 
\]

% The goal of the principal is to design the contract $s$ to maximize her utility, captured by the below program
% \begin{align*}
% &\max_{s, a} u_\P(a;s) \\
% \text{s.t. } & u_\A(a;s) \ge u_\A(a';s) \quad \forall a'\in \cA. 
% \end{align*}
% Slightly overloading notation, let $u_\P(s) := \max_{a} u_\P(a; s)$, where the maximum is taken over actions $a$ which are incentive compatible under $s$. 

\subsection{Repeated Interactions and Distributional Assumptions on Agent Arrival}
We introduce the sequential interaction setting, where the principal engages with agents over a total of $T$ rounds. In each round $t$, a new agent arrives with a private type. Each agent is characterized by a private cost function $c(\cdot)$ and a production function $f(\cdot)$. We abstract the agent's private type as $\theta_t = (c_t(\cdot), f_t(\cdot))$.

At the start of each round, the principal offers a contract $s_t$. The agent selects a best response based on their private type, and the principal subsequently observes the realized outcome in $\cY$. The principal's objective is to minimize regret, defined as the difference between the cumulative utility achievable under the best fixed contract and the cumulative utility obtained with the contracts posted over $T$ rounds:
\[
\sup_{s^*} \sum_{t=1}^T [ u_\P(s^\ast | \theta_t) - u_\P(s_t | \theta_t) ]. 
\]
Here, $u_\P(s|\theta)$ denotes the utility of posting contract $s$ when the agent has private type $\theta$. 

There are two primary settings of interest: identical agent arrival and heterogeneous agent arrival. In the identical arrival setting, the agent type remains constant across all rounds. The heterogeneous agent arrival setting can be further classified into stochastic and adversarial agent arrival. In the stochastic setting, agent types are drawn from an unknown distribution. In contrast, the adversarial setting allows an oblivious adversary to select agent types without adhering to any fixed distributional assumptions across rounds. 
%study a setting where the the principal has little to no knowledge on the agent's cost and production function, and must learn the optimal contract when repeated interactions take place. 

% \subsection{Repeated Interactions}
% The focus of this work is repeated principal-agent problems. The interaction protocol is as follows. The time horizon is $T$ and for each round $t\in [T]$, an agent arrives (either the same agent or drawn from some distribution). The principal posts a contract, the agent responds to the contract, and the principal observes the outcome. The principal makes the agreed transfer to the agent. 

% \begin{definition} [Linear Contracts]
% A linear contract is parameterized by $\beta\in \bbR^+$, under which the principal always transfers a $\beta$ fraction of the outcome to the agent. 
% \end{definition}
% Typically, the linear contract $\beta \in [0,1]$, so that both the principal and the agent is protected by limited liability. 

%Bounded contracts imply the principal is also protected by `limited liability', as the principal will never have to pay the agent more than the actual outcome. 

% In some cases, I will focus on the family of linear contracts. These are contracts specified by a scalar $\beta$, under which exactly a $\beta$ fraction of the outcome is transferred to the agent. 
\subsection{Abel's Lemma}
It will be sometimes more convenient to work with the complementary distribution function, which is the probability of obtaining an outcome no less than level $j$ when the agent takes action $a$. 
\[
G_j(a) := \sum_{k=j}^m f_j(a). 
\]
For any vector $s \in \bbR^m$, denote $\Deltas_j = s_j - s_{j-1}$, where we conveniently define $s_0 = 0$. The following identity is due to Abel's lemma, which will be used in several places throughout the analysis. 
\[
\sum_{j\in[m]} s_j f_j(a) = \sum_{j\in[m]} \Deltas_j G_j(a). 
\]

\szdelete{\color{red}
\subsection{Assumptions}
We state our key assumptions on the problem instance below. 
\begin{assumption}
$c$ is twice differentiable and strongly convex with parameter $\lambda_0$. 
\end{assumption}

The second assumption is that the production technology is Lipschitz continuous with respect to the effort level $a$. This is very natural, it implies the production function $f$ cannot change drastically when the agent shifts his effort by a small amount. 
\begin{assumption}
For any $y\in \cY$, $G_j(a)$ is $L$-Lipschitz in $a$ for any $j$. 
\end{assumption}

The third assumption is that the production outcome satisfies diminishing returns with respect to the effort level $a$. Let $r(a)$ be the expected production output of the agent: 
\[
r(a) = \sum_{j\in [m]} \pi_j f_j(a). 
\]
\begin{assumption}
r(a) is increasing, concave, and twice differentiable in $a$. 
\end{assumption}

\szcomment{remarks TODO}
}

\section{Learning Optimal Contracts with Many Outcomes}
\label{sec:manyOutcomes}

In this section, we examine the problem of learning an optimal contract when multiple outcomes are possible, and the agent’s type may vary across rounds. Prior work has modeled this as a continuum-armed bandit problem \cite{Zhu2022TheSC, ho2014adaptive}, often requiring specialized algorithms that are computationally intensive and challenging to generalize.

We propose an alternative approach using the first-order approach from contract theory, which allows us to reduce the problem to Lipschitz bandits. This not only simplifies the learning process but also ensures strong theoretical guarantees without relying on ad-hoc algorithmic modifications. By leveraging fundamental economic structures, our method directly connects online contract design with well-established learning frameworks, improving both efficiency and generalizability. 

In this section, we first demonstrate how conditions from the first-order approach, particularly Rogerson's CDFC condition, can be leveraged to reveal structural properties of the principal's utility function. We then show how this structure facilitates efficient learning, specifically through a reduction to Lipschitz bandits. Finally, we conclude with a discussion and a detailed comparison with prior works.

\subsection{Structural Property: Continuity of Principal's Utility}

%We first introduce the SDFC (smoothness of distribution function condition), which can be viewed as a relaxation of Rogerson's CDFC (convexity of distribution function condition) \cite{rogerson1985first}. 
A key insight of our approach is that applying the first-order approach, originally developed for characterizing optimal contracts, can reveal structural properties of the principal’s utility function. By analyzing the agent’s response to a given contract, we establish conditions under which the principal’s utility function exhibits continuity, a crucial property that enables efficient learning.
% \begin{definition}
% \label{definition:SDFC} The problem instance is said to satisfy the smoothness of distribution function condition (SDFC) condition if the second derivative of the complementary distribution function is bounded above by the second derivative of the cost function, specifically:
% \[
% G''_j(a) < c''(a), \forall j\in[m], a\in \cA. 
% \]
% \end{definition}

\begin{assumption} 
\label{assumption:cont}
The following assumptions are imposed in this section. 
\begin{enumerate}
\item Rogerson's CDFC holds: $G''_j(a) \le 0$
\item The cost function of the agent is strongly convex: $c''(a) \ge \lambda$. 
% \item The following holds:
% \[
% G''_j(a) \le \mu < \lambda \le c''(a), \quad \forall j\in [m], a\in\cA. 
% \]
% Here the derivative is taken with respect to $a$. 
\item  $G_j(a)$ is $L$-Lipschitz with respect to $a$:
\(
\abs{G'_j(a)} \le L. 
\)
\end{enumerate}
\end{assumption}

% \begin{remark}
% Recall Rogerson's CDFC requires the function $G$ be concave, i.e., $G''_j(\cdot) < 0$ for any $j$. Hence, part 1 in the above assumption can be seen as a relaxation of Rogerson's CDFC. In particular, part 1 of \Cref{assumption:cont} holds whenever Rogerson's CDFC holds and the cost function is strongly convex. 
% \end{remark}

\begin{proposition}
\label{prop:agentConcave}
Under part 1 of \Cref{assumption:cont}, for any monotone contract $s$, $u_\A(a; s)$ is strictly concave with respect to $a$. 
\end{proposition}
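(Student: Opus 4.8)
The plan is to show directly that the second derivative $u_\A''(a;s)$, taken with respect to $a$, is strictly negative for every $a \in \cA$. The naive route — differentiating $u_\A(a;s) = \sum_{j\in[m]} s_j f_j(a) - c(a)$ term by term — runs into trouble immediately: limited liability only supplies $s_j \ge 0$, whereas the curvatures $f_j''(a)$ of the individual outcome probabilities can have arbitrary sign, so there is no way to control the sign of $\sum_j s_j f_j''(a)$ from the stated hypotheses. This is exactly where the complementary distribution function and the monotonicity assumption must be brought in.

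First I would rewrite the agent's expected payment via Abel's lemma, $\sum_{j\in[m]} s_j f_j(a) = \sum_{j\in[m]} \Deltas_j G_j(a)$, so that $u_\A(a;s) = \sum_{j\in[m]} \Deltas_j G_j(a) - c(a)$. The payoff of this reparametrization is that the coefficients are now the increments $\Deltas_j = s_j - s_{j-1}$, which are precisely the quantities governed by the structural assumptions: monotonicity of $s$ gives $\Deltas_j \ge 0$, and, since $s_0 = 0$, the telescoping identity $\sum_{j\in[m]} \Deltas_j = s_m$ is bounded by $1$ for a bounded contract.

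Second, assuming $G_j$ and $c$ are twice differentiable, I would differentiate twice under the finite sum to get $u_\A''(a;s) = \sum_{j\in[m]} \Deltas_j G_j''(a) - c''(a)$. Now SDFC supplies the pointwise bound $G_j''(a) < c''(a)$ for every $j$ and $a$; multiplying by $\Deltas_j \ge 0$ and summing yields $\sum_{j} \Deltas_j G_j''(a) \le \big(\sum_{j} \Deltas_j\big)\, c''(a) = s_m\, c''(a)$, with the inequality strict as soon as some $\Deltas_j > 0$. Combining this with $s_m \le 1$ and convexity $c''(a) \ge 0$ gives
\[
u_\A''(a;s) \;<\; s_m\, c''(a) - c''(a) \;=\; (s_m - 1)\, c''(a) \;\le\; 0,
\]
which is the desired strict concavity.

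The one place needing care — and the step I expect to be the main obstacle to a fully rigorous statement — is the degenerate case in which all increments vanish, i.e. the all-zero contract $s \equiv 0$: there the first inequality above becomes an equality and one is left with $u_\A''(a;s) = -c''(a)$, which is strictly negative only if $c$ is strictly convex at $a$. I would dispose of this either by invoking strict (or strong) convexity of the cost function — the natural regularity already implicit in comparing against $c''$ in SDFC — or by simply excluding the trivial zero contract, under which the agent solves $\min_a c(a)$ and optimality at $a = 0$ is immediate. A secondary point worth recording explicitly is that the argument uses boundedness $s_m \le 1$ (part of the standing restriction to monotone, bounded contracts): without it, a monotone contract with a single large jump could push $\sum_j \Deltas_j G_j''(a)$ above $c''(a)$ at a point where $G_j'' > 0$, so I would state and prove the proposition for monotone bounded contracts.
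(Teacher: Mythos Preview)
Your proposal is correct and follows essentially the same route as the paper: rewrite the payment via Abel's lemma, differentiate twice, apply SDFC termwise using $\Deltas_j \ge 0$, telescope to $s_m$, and conclude via $s_m \le 1$. You are in fact more careful than the paper's own proof, which writes the strict inequality $\sum_j \Deltas_j G_j''(a) < \sum_j \Deltas_j c''(a)$ without flagging the degenerate all-zero contract or the implicit use of boundedness; your discussion of both points is apt and would strengthen the exposition.
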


The above proposition precisely implies that the agent's best response to a monotone contract is at a stationary point characterized by the the first-order condition:
\[
\sum_{j=1}^m \Delta s_j G'_j(a) = c'(a). 
\]
Then, by carefully analyzing the agent's best response as characterized by the above, it can be shown that the agent's hidden effort is a continuous with respect to the posted contract $s$. Finally, it can be shown that the principal's utility is continuous. The below theorem shows the Lipschitzness of the principal's utility function, and the proof can be found in \Cref{sec:proofLipschitz}. 

% \szcomment{
% \begin{remark}
% The SDFC is a relaxed and more flexible version than Rogerson's CDFC [][???]. In particular, the first part in the above assumption can be ensured to be satisfied if the principal were able to add a `regularizer' (say of the form $\frac{\lambda_0}{2} a^2$ for some suitable $\lambda_0$) to the agent's cost function. Practically speaking, this means the principal is artificially increasing the cost function of the agent in favor of more control of the agent's action. Anecdotal evidence suggests this phenomenon does in fact occur in practice. The idea of regularization in machine learning is well known (see e.g. Chapter 13 in \cite{shalev2014understanding}), but this application to the principal-agent problem seems to be new, and we were not able to find a direct reference. This point possibly merits further research. 
% \end{remark}
% }

%In the above assumption, we assume the cost function is (strongly) convex and the (complementary) distribution function satisfies a smoothness property. We term this the SDFC condition. 

% \begin{lemma}
% The agent's response $a(s)$ is Lipschitz continuous with respect to the contract $s$ in the monotone contract domain, specifically, for any two monotone contracts $s^1, s^2$:
% \[
% \abs{a(s^1) - a(s^2)} < \frac{2L}{\lambda - \mu} \norm{s^1 - s^2}_{1}. 
% \]
% \end{lemma}

\begin{theorem}
\label{thm:lipschitz}
Under \Cref{assumption:cont}, the utility of the principal $u_\P(s)$ is Lipschitz (as a function of the contract $s$), specifically:
\[
\abs{u_\P(s) - u_\P(s')} \le \left( \frac{4L^2}{\lambda} + 2 \right)\cdot \norm{s - s'}_1. 
\]
\end{theorem}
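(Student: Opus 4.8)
The plan is to exploit \Cref{prop:agentConcave}: under \Cref{assumption:cont} the agent's utility $u_\A(\cdot;s)$ is strictly concave, so for every (monotone, bounded) contract $s$ the incentive-compatible action is the unique maximizer $a(s) := \arg\max_{a\in\cA} u_\A(a;s)$, and hence $u_\P(s) = u_\P(a(s);s)$ is unambiguously defined. The Lipschitz bound then follows from two ingredients: (i) the best-response map $s\mapsto a(s)$ is Lipschitz in the $\ell_1$-geometry of the contract, and (ii) $u_\P(a;s)$ is Lipschitz in $a$ and in $s$ separately. Writing $a_1 = a(s^1)$ and $a_2 = a(s^2)$, I would then split
\[
u_\P(s^1) - u_\P(s^2) = \big(u_\P(a_1;s^1) - u_\P(a_1;s^2)\big) + \big(u_\P(a_1;s^2) - u_\P(a_2;s^2)\big),
\]
and bound the first bracket by (ii) and the second by (i)$\,\times\,$(ii).

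The core step is (i). First I would pin down a uniform strong-concavity modulus: by Abel's lemma $u_\A(a;s)=\sum_j \Delta s_j\, G_j(a) - c(a)$, so $u''_\A(a;s) = \sum_j \Delta s_j\, G''_j(a) - c''(a)$. For a monotone contract $\Delta s_j\ge 0$ and for a bounded one $\sum_j \Delta s_j = s_m \le 1$; combined with $G''_j\le\mu$ and $c''\ge\lambda$ this gives $u''_\A(a;s) \le \mu - \lambda = -(\lambda-\mu)<0$, i.e. $(\lambda-\mu)$-strong concavity uniformly in $a$ and $s$. To turn this into a Lipschitz bound on $a(\cdot)$ while allowing the maximizer to sit at a corner of $\cA=[0,E]$, I would use the first-order variational inequalities $u'_\A(a_i;s^i)(a-a_i)\le 0$ for all $a\in\cA$. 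Testing the $s^1$ inequality at $a=a_2$ and the $s^2$ inequality at $a=a_1$, adding them, and inserting the strong monotonicity of $a\mapsto u'_\A(a;s^1)$ yields
\[
(\lambda-\mu)\,\abs{a_1-a_2}^2 \le (a_2-a_1)\big(u'_\A(a_2;s^2)-u'_\A(a_2;s^1)\big).
\]
Since $u'_\A(a_2;s^2)-u'_\A(a_2;s^1) = \sum_j(\Delta s^2_j - \Delta s^1_j)G'_j(a_2)$ (the $c'$ term cancels), using $\abs{G'_j}\le L$ together with $\norm{\Delta s^2 - \Delta s^1}_1 \le 2\norm{s^1-s^2}_1$ gives $\abs{a_1-a_2}\le \tfrac{2L}{\lambda-\mu}\norm{s^1-s^2}_1$.

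It remains to carry out (ii) with Abel's lemma, which supplies the two explicit constants. For the first bracket (action fixed, contract varying), $u_\P(a_1;s^1)-u_\P(a_1;s^2) = \sum_j (s^2_j-s^1_j)f_j(a_1) = \sum_j \Delta(s^2-s^1)_j\,G_j(a_1)$; since $0\le G_j\le 1$ this is at most $\norm{\Delta(s^2-s^1)}_1 \le 2\norm{s^1-s^2}_1$. For the second bracket (contract fixed, action varying) I would bound the $a$-derivative: differentiating Abel's identity gives $\tfrac{d}{da}u_\P(a;s^2)=\sum_j(\pi_j-s^2_j)f'_j(a) = \sum_j \Delta(\pi-s^2)_j\,G'_j(a)$, so $\abs{\tfrac{d}{da}u_\P(a;s^2)}\le L\sum_j\abs{\Delta(\pi-s^2)_j}\le L(\pi_m+s^2_m)\le 2L$, using monotonicity of $\pi$ and $s^2$ and $\pi_m,s^2_m\le 1$. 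Hence $\abs{u_\P(a_1;s^2)-u_\P(a_2;s^2)}\le 2L\abs{a_1-a_2}\le \tfrac{4L^2}{\lambda-\mu}\norm{s^1-s^2}_1$, and summing the two brackets gives the claimed constant $\tfrac{4L^2}{\lambda-\mu}+2$.

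The main obstacle I anticipate is step (i): controlling the best-response map uniformly, in particular handling corner solutions, where the first-order condition is only an inequality, and ensuring the strong-concavity modulus $\lambda-\mu$ is genuinely uniform over all admissible contracts. The variational-inequality argument is what makes this robust; everything else reduces to bookkeeping with Abel's lemma and the elementary estimate $\norm{\Delta v}_1\le 2\norm{v}_1$. One point to keep honest is that the bound $\sum_j\Delta s_j\le 1$ (hence the clean modulus $\lambda-\mu$) uses both monotonicity and boundedness of the contracts, which is precisely the regime the paper restricts to.
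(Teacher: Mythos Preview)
Your proof is correct and shares the paper's two-step skeleton: first show the best-response map $s\mapsto a(s)$ is $\tfrac{2L}{\lambda-\mu}$-Lipschitz in $\|\cdot\|_1$, then push this through $u_\P$. The execution differs in two places. For the best-response bound, you invoke the variational-inequality characterization of constrained maximizers together with uniform $(\lambda-\mu)$-strong concavity of $u_\A(\cdot;s)$; the paper instead assumes without loss of generality $a(s^1)<a(s^2)$, reads off the signs of the one-sided first-order conditions at each endpoint, and combines them directly --- both arguments cover corner solutions and land on the same constant. For the utility bound, you split $u_\P(s^1)-u_\P(s^2)$ as ``vary $s$ with $a$ fixed'' plus ``vary $a$ with $s$ fixed'' and control the latter via $\abs{\tfrac{d}{da}u_\P(a;s)}\le 2L$; the paper instead writes $u_\P(s)=r(a(s))-\sum_j s_j f_j(a(s))$ and bounds the expected-outcome and expected-payment pieces separately via Abel's lemma. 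Your decomposition is the generic composition-Lipschitz argument and is a touch tidier; the paper's outcome/payment split makes the economic provenance of the two summands in $\tfrac{4L^2}{\lambda-\mu}+2$ more visible.
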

% \begin{proof}[Proof Sketch.]
% By the previous proposition, the agent's maximization problem is concave. Hence, the agent's best response is exactly characterized by the equality in the first-order condition. Further, it can be shown the agent's best response will be continuous with respect to the contract. I.e., the agent's best response will not change drastically when the contract changes slightly. Then, the principal's utility becomes continuous. 
% \end{proof}
\begin{remark}
In fact, the result still holds under a much weaker condition. In particular, the theorem still holds as long as $\min_{a} c''(a) - \max_{a,j}G''_j(a) > \lambda$. 
\end{remark}

\subsection{Implications in the Online Learning Setting: Reduction to Lipschitz Bandits}
% \begin{algorithm}
% \caption{Reduction to Lipschitz bandits}
% \begin{algorithmic}
% \State Require: Algorithm for Lipschitz Bandits: $\AlgLB$
% \State Input: $L, \lambda, \mu$
% \State $H_0 = \emptyset$
% \For{$t = 1,\dots, T$}
% \State Agent with hidden type $\theta_t$ arrives
% \State Principal posts the contract $s_t = \AlgLB(H_{t-1})$
% \State Observe reward $r_t \sim D(a_t;\theta_t)$, append to history $H_t := H_{t-1} \cup \set{s_t, r_t} $
% \EndFor
% \end{algorithmic}
% \end{algorithm}
In this section, we analyze the regret implications of learning contracts in a sequential setting where the principal interacts with a series of agents over $T$ rounds. Each agent possesses a private type $\theta$, which characterizes their production technology and cost function. The principal aims to optimize her contract selection strategy under two different agent arrival scenarios: stochastic agent arrivals and adversarial agent arrivals. Let $u_P(s | \theta)$ represent the utility of a contract $s$ given agent type $\theta$, and assume that \Cref{assumption:cont} holds for all agent types.

Recall in the problem of continuum-armed bandits, a learner chooses an ``arm" in some space (often a continous high-dimensional space), and observes a realized ``reward". The contract design problem then bear close resemblance to the problem, with the contract being treated as the ``arm" and the realized utility as the realized ``reward". Moreover, a prominent line of study is Lipschitz bandits \cite{Kleinberg2013BanditsAE}. The structural result in the previous section precisely enables a direct reduction to Lipschitz bandits. 
\szcomment{rewrite. }

\paragraph{Stochastic agent arrival}
In the stochastic setting, the agent's type in each round is drawn from some fixed distribution $\cD$ that is unknown to the principal. Then, fixing the type $\theta$, the principal's utility $u_\P(s|\theta)$ is Lipschitz. Consequently $\Ex_{\theta\sim \cD}[u_\P(s|\theta)]$ is also Lipschitz. The performance of the principal is measured by the regret, defined as:
\[
T\cdot \max_{s} \Ex_{\theta\sim \cD}[u_\P(s|\theta)] - \sum_{t\in[T]} \Ex_{\theta_t \sim \cD}[u_\P(s_t| \theta_t)]. 
\]
The principal can then directly use the machinery for stochastic Lipschitz bandits (e.g. \cite{Kleinberg2013BanditsAE}) to optimize her regret. 
% \begin{prop}
% Assume each round the agent is drawn i.i.d. from some fixed distribution. Using the zooming algorithm for stochastic Lipschitz bandits achieves a regret bound:
% \[
% ???
% \]
% \end{prop}
\begin{proposition}
In the stochastic arrival scenario, the principal can achieve a regret $\widetilde{O}(T^{(z+1)/(z+2)})$. Here $z$ is the zooming dimension of the problem instance (which depends on $\Ex_{\theta\sim \cD}[u_\P(s|\theta)]$). 
\end{proposition}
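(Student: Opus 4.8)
The plan is to realize the stochastic-arrival problem as a standard instance of stochastic Lipschitz bandits over the contract space and then invoke the zooming algorithm of \cite{Kleinberg2013BanditsAE} essentially as a black box. The entire content of the proposition is the \emph{reduction}; once the expected reward is shown to be Lipschitz, the regret bound is off-the-shelf. So the work is to (i) set up the metric bandit instance, (ii) verify its Lipschitz property, and (iii) match the bandit regret to the regret defined in the proposition.

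First I would fix the arm set to be the space of monotone, bounded contracts $\cS \subseteq [0,1]^m$, equipped with the metric $d(s^1,s^2) = C\,\norm{s^1 - s^2}_1$, where $C = \frac{4L^2}{\lambda-\mu} + 2$ is the Lipschitz constant from the Lipschitz-continuity theorem established above. This is a compact metric space of bounded diameter (at most $Cm$), exactly the object the Lipschitz bandit framework expects. The key step is then to show that the mean reward $\bar u(s) := \Ex_{\theta\sim\cD}[u_\P(s|\theta)]$ is $1$-Lipschitz under $d$. Since \Cref{assumption:cont} is assumed to hold for every type $\theta$ with the \emph{same} constants $L,\lambda,\mu$, that theorem gives $\abs{u_\P(s^1|\theta) - u_\P(s^2|\theta)} \le C\,\norm{s^1-s^2}_1$ uniformly in $\theta$. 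Taking expectations over $\theta\sim\cD$ and applying the triangle inequality yields $\abs{\bar u(s^1) - \bar u(s^2)} \le \Ex_{\theta}\abs{u_\P(s^1|\theta) - u_\P(s^2|\theta)} \le C\,\norm{s^1-s^2}_1 = d(s^1,s^2)$, so $\bar u$ is $1$-Lipschitz in the scaled metric.

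Next I would specify the feedback model so that it matches a stochastic bandit. In each round the principal posts $s_t$, an independent type $\theta_t\sim\cD$ is drawn, the agent best responds, and the principal observes the realized outcome together with its payment, so the realized per-round reward lies in $[-1,1]$ (as $\pi_j, s_j\in[0,1]$), is independent across rounds, and has conditional mean $\bar u(s_t)$. Hence $(\cS, d, \bar u)$ is a bounded stochastic Lipschitz bandit instance. Applying the zooming algorithm, which attains regret $\widetilde{O}(T^{(z+1)/(z+2)})$ on any such instance with zooming dimension $z$, gives the claim; here $z$ is by definition the zooming dimension of $(\cS, d, \bar u)$, hence a property of $\Ex_{\theta\sim\cD}[u_\P(s|\theta)]$. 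I would close by noting that the zooming algorithm's regret, $\sum_t\bigl(\max_s\bar u(s) - \bar u(s_t)\bigr)$, coincides term-by-term with the regret in the proposition statement, since $\bar u(s_t) = \Ex_{\theta_t\sim\cD}[u_\P(s_t|\theta_t)]$.

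The main obstacle is not technical depth but the \emph{uniformity} of the Lipschitz constant across types: the reduction goes through only because $C$ depends solely on $L,\lambda,\mu$, and \Cref{assumption:cont} bounds these uniformly over all types. If the constants were allowed to vary with $\theta$ without a uniform bound, $\bar u$ need not be Lipschitz and the reduction would fail; this is the one point I would state carefully rather than treat as routine. Everything else — compactness of $\cS$, boundedness of rewards, and the regret bound itself — follows directly from the cited Lipschitz bandit machinery.
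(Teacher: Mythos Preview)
Your proposal is correct and follows exactly the approach the paper intends: the paper itself does not give a detailed proof but simply asserts that, once $u_\P(s|\theta)$ is shown to be Lipschitz for every type (via the preceding theorem), the expected utility $\Ex_{\theta\sim\cD}[u_\P(s|\theta)]$ is Lipschitz and the stochastic Lipschitz bandit machinery of \cite{Kleinberg2013BanditsAE} applies directly. Your write-up is in fact more careful than the paper's own treatment --- in particular your explicit observation that the reduction hinges on the \emph{uniformity} of the Lipschitz constant $C = \frac{4L^2}{\lambda-\mu}+2$ across all types $\theta$ is a point the paper leaves implicit.
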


\paragraph{Adversarial agent arrival}
In the adversarial arrival setting, the agent's type in each round is chosen by some adversary. The regret of the principal is then defined as:
\[
\max_{s} \sum_{t\in[T]} u_\P(s|\theta_t) - \sum_{t\in[T]} u_\P(s_t| \theta_t). 
\]
The principal can directly use the machinery for adversarial Lipschitz bandits (e.g. \cite{Podimata2020AdaptiveDF}) to optimize her regret. 

\begin{proposition}
In the adversarial arrival scenario, the principal can achieve a regret $\widetilde{O}(T^{(z+1) / (z+2)})$. Here $z$ is the adversarial zooming dimension of the problem instance (which depends on the sequence of functions $u_\P(s|\theta_1), \dots, u_\P(s|\theta_T)$). 
\end{proposition}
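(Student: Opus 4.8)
The plan is to reduce the adversarial contracting problem to an instance of adversarial Lipschitz bandits and then invoke the algorithm of \cite{Podimata2020AdaptiveDF} as a black box. The arms of the bandit instance are the monotone bounded contracts, which form a compact subset of $\bbR^m$; equip this set with the $\ell_1$ metric $d(s^1, s^2) = \norm{s^1 - s^2}_1$. In round $t$ the adversary's choice of type $\theta_t$ determines a reward function $s \mapsto u_\P(s \mid \theta_t)$ over this arm space, and the principal's regret as defined above is exactly the adversarial Lipschitz-bandit regret against this sequence of reward functions.

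First I would argue that every reward function in the sequence is Lipschitz with a single common constant. Because \Cref{assumption:cont} is assumed to hold for every type with uniform parameters $\mu, \lambda, L$, the preceding Lipschitz continuity theorem applies verbatim to each fixed type $\theta_t$, giving
\[
\abs{u_\P(s^1 \mid \theta_t) - u_\P(s^2 \mid \theta_t)} \le C \cdot \norm{s^1 - s^2}_1, \qquad C = \frac{4L^2}{\lambda - \mu} + 2,
\]
with the \emph{same} constant $C$ in every round. (Implicit here is that, by \Cref{prop:agentConcave}, the agent's best response is unique under any monotone contract, so each $u_\P(\cdot \mid \theta_t)$ is well defined.) After rescaling the metric by $C$, the entire sequence consists of $1$-Lipschitz functions on one fixed metric space, which is precisely the input assumed by an adversarial Lipschitz bandit algorithm.

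Next I would verify that the principal's feedback matches the bandit model. After posting $s_t$ the agent best responds with some action $a_t$, and an outcome $\pi_j$ is realized according to $f(a_t)$; the principal then collects the realized reward $\pi_j - s_{t,j} \in [-1,1]$. Since $\Ex[\pi_j - s_{t,j}] = \sum_{j \in [m]} (\pi_j - s_{t,j}) f_j(a_t) = u_\P(s_t \mid \theta_t)$, the observed reward is a bounded, unbiased estimate of the expected reward function evaluated at the chosen arm, exactly the bounded-noise feedback model with adversarially varying means handled by \cite{Podimata2020AdaptiveDF}. Invoking their guarantee then yields regret $\widetilde{O}(T^{(z+1)/(z+2)})$, where $z$ is the adversarial zooming dimension of the sequence $u_\P(\cdot \mid \theta_1), \dots, u_\P(\cdot \mid \theta_T)$, as claimed.

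The main obstacle is not any single calculation but the bookkeeping needed to confirm the two interface conditions of the black-box reduction: (i) a single Lipschitz constant valid simultaneously for all types, which relies crucially on the uniformity built into \Cref{assumption:cont}; and (ii) that the realized principal utility furnishes bounded, unbiased bandit feedback for the expected reward function. Once both are checked, the adversarial zooming dimension $z$ is defined directly in terms of the observed reward sequence, and the stated regret bound follows immediately from the cited algorithm, with no problem-specific analysis beyond the established Lipschitz property required.
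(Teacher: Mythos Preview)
Your proposal is correct and follows exactly the approach the paper intends: the paper does not give a standalone proof of this proposition but simply asserts that, once the Lipschitz continuity theorem is in hand, the principal ``can directly use the machinery for adversarial Lipschitz bandits (e.g.\ \cite{Podimata2020AdaptiveDF}),'' and your write-up supplies precisely the interface checks (uniform Lipschitz constant across types, bounded unbiased bandit feedback) that make this black-box invocation rigorous.
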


\subsection{Discussion of Result}
In our results, we assume a continuous action space, which enables the application of ideas from the first-order approach. Originally developed for characterizing optimal contracts in the economics community, we demonstrate that the first-order approach can also provide structural insights that inform the design of learning algorithms. This perspective reveals new connections between economic theory and online learning, offering a more principled framework for algorithm design.

We now compare our approach with two of the most related works on the problem studied in this section. The work by \cite{ho2014adaptive} tackled this problem by proposing a modified version of the zooming algorithm for Lipschitz bandits, relying primarily on the first-order stochastic dominance (FOSD) assumption. A follow-up study by \cite{Zhu2022TheSC} sought to relax these assumptions, presenting an algorithm with fewer structural requirements. However, this relaxation comes at a cost, as the resulting algorithm exhibits a worse regret guarantee, lacks computational efficiency, and relies on a discretization approach based on spherical codes, which is less broadly applicable and may lack generalizability compared to more well-known techniques.

Our approach highlights the importance of leveraging appropriate economic structures. By incorporating insights from the first-order approach, we bridge the gap between economic theory and learning algorithms, demonstrating that this framework allows the problem to be reduced to Lipschitz bandits in a more natural and effective way. This reduction not only simplifies the analysis but also leads to stronger theoretical guarantees, and shows that that by integrating economic intuition into algorithmic design can lead to significant improvements.

\section{Learning Optimal Contracts with Binary Outcomes (Linear Contracts)}
\label{sec:linear}

% In this section, we study the problem of learning linear contracts with binary outcomesz. The decision variable for the principal is one-dimensional (namely $\beta\in [0,1]$), and hence the problem can be viewed as a one-dimensional continuum armed bandit problem. Even more, we show the problem of learning linear contracts shares exactly the same structure as the dynamic pricing problem and can be reduced to one another. Note that the two problems are fundamentally different from an incentive standpoint, the former being a moral hazard problem with hidden action and the latter being an adverse selection problem with hidden type, however from an optimization standpoint, the two problem share the same structure. 

In this section, we examine the problem of learning linear contracts in a setting where outcomes are binary. %The principal’s decision variable is a single parameter, $\beta \in [0,1]$, making this problem analogous to a one-dimensional continuum-armed bandit framework. Moreover, 
We establish a structural equivalence between learning linear contracts and posted-price pricing, demonstrating that the two problems can be transformed into one another. While these problems differ in their \emph{incentive} structure—contracting addresses moral hazard due to hidden actions, whereas pricing involves adverse selection due to hidden types—their underlying \emph{optimization} structure is remarkably similar. 

%Consider the \emph{contracting} problem with a binary outcome setting, i.e. $\cY = \set{0,1}$. An outcome of 1 indicates the success of the task and 0 indicates failure. 

\subsection{Connection between contracting and pricing problems}

This section explores the relationship between contracting and pricing problems by illustrating their structural similarities. We first introduce the two models in more detail, then discuss their equivalence. % Although these two problems arise in different economic contexts—one in contract theory and the other in pricing—they share a fundamental mathematical structure that allows for a unified analytical approach.

\paragraph{Contracting}
In the contracting setting, a principal offers a linear contract, represented by $\beta \in [0,1]$, to an agent. The outcome of the task is binary: either success ($\pi_2 = 1$) or failure ($\pi_1 = 0$). The agent chooses an effort level $a \in [0,1]$, which determines the probability of success. Higher effort levels increase the likelihood of success but is also associated with higher costs, which recall we denoted as $c(a)$.

When faced with a contract $\beta$, the agent selects an effort level $a$ to maximize their utility, given by $\beta a - c(a)$. The agent’s optimal effort choice satisfies the condition $\beta = c'(a)$, which defines the agent’s best response function $A(\beta)$. The principal, in turn, aims to maximize her own expected utility, given by 
\[
(1 - \beta) A(\beta). 
\]

\paragraph{Pricing}
In the pricing problem, a seller sets a price $p$ for a good, while potential buyers have valuations drawn from a distribution $\mathcal{D}$. The probability that a buyer makes a purchase at price $p$ is given by the demand function $D(p) = \Pr_{v \sim \mathcal{D}}[v \geq p]$. The seller’s expected revenue is then 
\[
p \cdot D(p). 
\]

\paragraph{Structural Equivalence Between the Two Problems}
Despite differences in economic interpretation, the contracting and pricing problems share a common mathematical form. The principal’s utility function in contracting, $(1 - \beta) A(\beta)$, has the same structure as the seller’s revenue function in pricing, $p \cdot D(p)$. By defining $\bar{\beta} = 1 - \beta$ as the share of the outcome retained by the principal, we can rewrite the contracting utility function as 
\[
\bar{\beta} \cdot A(\bar{\beta}). 
\]
Now, this directly mirrors the pricing revenue function 
\[
p \cdot D(p). 
\]
Hence, the two problems share the exact structure from an optimization perspective. 

\begin{multicols}{2}
\begin{framed}
\textbf{Contracting}
    \begin{enumerate} [leftmargin = *]
    \item Agent has cost function $c(\cdot)$
    \item Principal posts contract $\beta$
    \item Agent responds with $a = \arg\max \beta a - c(a)$ (\textbf{hidden action})
    \item Outcome realized as $y \sim \Ber(a)$
    \item Principal's expected utility is $(1-\beta) \cdot A(\beta)$
    \end{enumerate}
\end{framed}
\begin{framed}
\textbf{Pricing}
    \begin{enumerate}[leftmargin = *]
    \item Buyer's valuation distribution is $\mathcal{D}$
    \item Seller posts price $p$
    \item Buyer's valuation $v \sim \mathcal{D}$ (\textbf{hidden type})
    \item Outcome realized as $y = \ind{v \ge p} \sim \Ber(D(p))$
    \item Seller's expected utility is $p\cdot D(p)$
    \end{enumerate}
\end{framed}
\end{multicols}

% the first layer is that the demand curve is unknown, and the second is that the exact realization of the buyer's valuation (drawn from the demand curve) is unknown. % Here it is shown that the uncertainty associated with moral hazard

\subsection{Implications in the Regret-Minimization Online Learning Setting}
This structural equivalence indicates that techniques and insights from the study of dynamic pricing can be leveraged to analyze the problem of learning linear contracts. In particular, learning algorithms developed for dynamic pricing—such as those used for optimizing prices under uncertain demand—can be adapted for online contract design, facilitating more efficient solutions to principal-agent problems.

This section examines the implications of the structural equivalence between contracting and pricing problems in an online learning context. Specifically, we focus on the regret-minimization setting, where the principal aims to optimize contract selection over multiple rounds without prior knowledge of the agent's response function. We first analyze the case where agents are identical, meaning they share the same cost function across all rounds, which parallels the pricing problem where buyers’ valuation distributions remain consistent.

To formalize our results, we introduce the following notation. Let $\mathcal{F}$ be a class of functions. For our application, this will be a class satisfying certain properties (e.g., linear functions, $k$-th order smooth functions, etc. ). Each function in $\mathcal{F}$ represents either the response function in the contracting problem or the demand function in the pricing problem.

Recall we use $A(\beta)$ as the agent's response function in the contracting problem. The regret of an algorithm for the contracting problem is defined as:
\begin{equation} \Reg(T) = \sup_{A\in \mathcal{F}} \sum_{t=1}^T u(\beta^*; A) - u(\beta_t; A), \end{equation}
where $\beta_t$ is the sequence of posted contracts selected by the algorithm, and $\beta^*$ is the optimal contract if the agent's response function $A$ were known. 

The analogous regret definition for the pricing problem is as follows:
\begin{equation}
\Reg(T) = \sup_{D\in \mathcal{F}} \sum_{t=1}^T u(p^*; D) - u(p_t; D).
\end{equation}
where $p_t$ is the sequence of prices selected by the algorithm for the pricing problem, and $p^*$ is the optimal price if the demand function $D$ were known. 

\begin{proposition}
Let $\mathcal{F}$ represent a class of functions, which is known to the learner. The following statements are equivalent:
\begin{itemize}
	\item There exists an algorithm that achieves regret $\Reg(T)$ in the pricing problem. 
	\item There exists an algorithm that achieves regret $\Reg(T)$ in the contracting problem.
\end{itemize}
\end{proposition}

There is a rich body of work on dynamic pricing that establishes regret bounds based on the properties of the demand function $D$, which can belong to different function classes $\mathcal{F}$. For each function class, algorithms are designed to exploit specific structural properties, achieving distinct regret rates. Commonly studied function classes include those with $k$-th order smoothness, infinitely differentiable functions with bounded derivatives, and others (e.g., \cite{wang2021multimodal}). Through our reduction, these results directly translate to regret bounds in the contracting problem, as the agent's response function (or cost function) can similarly belong to various function classes. This broad applicability of our reduction between pricing and contracting provides a unified framework to derive optimal regret bounds for contracting across different function classes. 

% Note that there is a rich line of work studying regret bounds for the dynamic pricing problem when the demand function satisfies various properties. These regret bound can directly translate to regret bounds in the contracting problem. 

As a concrete example, consider the foundational results in \cite{kleinberg2003value}, where they show that if the pricing problem has a unique maxima and that the seller's revenue is locally strongly concave around the unique maxima, then the learner can achieve a $\widetilde{O}(\sqrt{T})$ regret, and that further no algorithm can achieve $o(T)$ regret. This directly translates to regret bounds in the contracting problem as follows. 

\begin{proposition}
Consider the contracting problem. If the principal's utility function $u(\beta)$ has a unique global maximum at $\beta^*$ and that $u''(\beta) > 0$, then there exists an algorithm that achieves regret $\widetilde{O}(\sqrt{T})$. Further, no algorithm can achieve regret $o(\sqrt{T})$. 
\end{proposition}

\paragraph{Extending to Stochastic Agent Arrival}
We show that the previous argument for identical agent arrival can extend directly to stochastic agents. In the stochastic arrival setting, the cost function $c(\cdot)$ of the agent each round is drawn from some unknown distribution. For each possible agent's cost function $c(\cdot)$, denote $A(\beta; c(\cdot))$ as the agent's response function. Finally, let us denote 
\[
\bar{A}(\beta) = \Ex_{c}[A(\beta; c(\cdot))]
\]
Then $\bar{A}(\beta)$ denotes the agent's expected response function, and the results extend from the identical case. 
% Then, in the contracting problem, when the principal posts a contract $\beta$, the principal observes a signal with $y_t \sim \Ber(\bar{A}(\hat{\beta}))$, and the expected utility to the principal is $\hat{\beta} \cdot \bar{A}(\hat{\beta})$. 

% Hence, the problem becomes exactly a dynamic pricing problem, where the unknown demand curve has the same shape as $\bar{A}$. 

\paragraph{Extending to Adversarial Agent Arrival}
In the adversarial arrival setting, the agent's type each round need not satisfy any distributional assumptions. This becomes equivalent to a pricing problem where the agent's value distributions are chosen by an oblivious adversary each round. We can again apply the result from \cite{kleinberg2003value} and achieve a $O(T^{2/3})$ regret for the contracting problem in the adversarial arrival setting. 

\subsection{Discussion of Results}

Although prior work has studied linear contracts from a computational perspective, most assume a discrete action space for the agent. For example, \cite{dutting2019simple} examined the implementability of actions under linear contracts. As a result, they introduce cumbersome notations and complex arguments to determine which action is implemented under a given linear contract. Specifically, their analysis involves terms like $(c(a') - c(a)) / (a' - a)$ (using our notation), which just simplifies to the gradient when $a$ comes from a continuous space. However, since they shy away from working with continuous action spaces, they complicate what should be a straightforward observation: the agent’s action is at a stationary point determined by the first-order condition. By explicitly leveraging this characterization—which, while straightforward, uncovers meaningful structural insights—we establish an optimization-based equivalence between contracting and pricing. While mathematically simple, to the best of our knowledge, this result has not been explicitly stated in prior work. 

Consider the prior work \cite{dutting2023optimal} which studied the problem of learning one-sided Lipschitz functions, motivated by learning linear contracts when the agent’s action space is finite. While their algorithm is interesting, it appears highly specialized. Further, their algorithm assumes deterministic observations. By contrast, our approach shows that the problem of learning linear contracts is \emph{equivalent} to the well-studied dynamic pricing problem, allowing the two problems to be reduced to one another. This connection is simple yet fundamental, and it appears that prior works have overlooked this key relationship.

We also briefly compare our results with \cite{Zhu2022TheSC}, which studied the problem of learning linear contracts and provided a lower bound of $\Omega(T^{2/3})$. There is no contradiction, as the difference arises from distinct modeling choices. Specifically, their lower bound is derived from a construction where the agent has an exponential number of discrete actions with artificially designed cost values. In contrast, we assume a continuous action space with a convex cost function, leading to a fundamentally different structure.

\section{Learning Contracts for Many Outcomes with Identical Agents}
\label{sec:identical}

\begin{algorithm}
\caption{Polynomial Sample Complexity for Identical Agents}
\label{algo:polynomialForIdentical}
\begin{algorithmic}
\State \szcomment{Change this to directly estimate CDF}
\State Input: $\eps, \delta$. Algorithm returns $\eps$-approximately optimal contract with probability $1-\delta$. 
\State Set $\eps_c = \frac{\lambda \eps}{2L^2}, N = \frac{2\log( \beta_{\max}/(\delta \eps_c) )}{\eps^2}$
\State Set $\C_D = \set{0,1,\dots, \floor{\beta_{\max}/\eps_c}} \cdot \eps_c$, the discretized contract space
%\State Principal discretizes the linear contract space $[0,1]$ into $\eps_c$-meshes, i.e. $\C_D = \eps_c\cdot \E$
%\State Principal discretizes the outcome space $\cM$ into $\eps_2$ meshes, i.e. $\cM_D = \set{0, \eps_2, 2\eps_2, \dots }$
%\State Set parameter $\eps_y = ???$
\State // \emph{First Stage}: Learn estimates of the agent's production and cost function by querying linear contracts
\State For each contract $\beta^\i \in \C_D$, the principal queries this contract $N$ times
\State Principal records the empirical complementary CDF $\hat{G}_j(a^\i)$\\
\State Principal computes the empirical average outcome using $\hat{G}$:
\[
\hat{r}(a^\i) = \Ex_{j\sim \hat{G}(a^\i)}[\pi_j]
\]
% \State Principal computes the empirical average outcome from the queries:
% \begin{align*}
% \hat{r}(a^\i) &:= \frac{\sum y_t}{N_1}
% %\hat{f}_j(\beta) &:= \frac{\ind{y_t \in [j, j+\eps)}}{N_1} \\
% \end{align*}

% \State For each $j \in [m]$, estimate
% \[
% \hatf_j(a^\i) = \hatG_j(a^\i) - \hatG_{j+1}(a^\i)
% \]
\State Principal compute the confidence interval for the cost as:
\begin{align*}
{c}^\LCB(a^\i) &:=  \sum_{j=1}^i \beta^{(j-1)} (\hat{r}(a^\j) - \hat{r} (a^{(j-1)})) - 3\eps, \\
{c}^\UCB(a^\i) &:= \sum_{j=1}^i \beta^\j (\hat{r}(a^\j) - \hat{r}(a^{(j-1)})) + 3\eps
\end{align*}
%\State Record the estimated parameters $\hat{r}(a^\i), \hat{G}_j(a^\i), {c}^\UCB(a^\i), {c}^\LCB(a^\i)$ for all $i \in \set{0, 1, \dots, \floor{1/\eps_c}}$
\State // \emph{Second Stage}: Solve LPs using the estimated parameters
\State Compute
\begin{align*}
    u^\UCB_\A(a^\i; s) := \sum_{j\in[m]}\Deltas_j G^\UCB_j(a^\i) - c^\LCB(a^\i) \\
    u^\LCB_\A(a^\i; s) = \sum_{j\in[m]}\Deltas_j G^\LCB_j(a^\i) - c^\UCB(a^\i). 
\end{align*}
%\State Solve a family of Linear Programs $\LP(a^\i)$ and return the approximately optimal contract (specified in \Cref{subsec:LPs})
\State For each $a^\i \in \cA_D$, define the following program $\LP(a^\i)$, denote its value by $\APX(a^\i)$:
\begin{align*}
\max_s \quad& \hatr(a^\i) - \sum_{j\in[m]} s_j \hatf_j(a^\i) \\
u_\A^\UCB(a^\i; s) &\ge u_\A^\LCB(a^\k; s) - \eps \quad \forall a^\k \in \cA_D \\
s_j & \ge s_{j-1}, s_j \ge 0
\end{align*}
\State Return the solution to the linear program with maximum $\APX(a^\i)$
\end{algorithmic} 
\end{algorithm}

In this section, we study the problem of learning the optimal contract when agents are identical, meaning each agent has the same cost and production function. The number of possible outcomes may be finite (not necessarily binary) or even infinite. Our approach integrates ideas from inverse game theory, the first-order approach, and Grossman and Hart's framework for contract design. The results in this section will be more suitable stated in terms of sample complexity rather than regret. 

At a high level, the principal first conducts inverse learning to infer the agent's production and cost functions. This involves selecting contracts that induce specific agent actions, allowing the principal to observe the associated costs and production outcomes. By systematically varying the contract terms, the principal effectively recovers the agent's underlying production and cost structure.

For general and complex contracts, predicting the agent's best response can be challenging. However, within a certain class—specifically, linear contracts—the agent’s best response can be precisely characterized using the first-order condition (as we also noted in the previous section). By leveraging linear contracts, the principal can induce a range of agent actions and infer the agent’s private type. Once this inverse learning phase is complete, the principal applies Grossman and Hart's approach, solving a family of linear programs to identify an approximately optimal contract.

\begin{assumption} 
The following assumptions are made in this section. 
\begin{enumerate}
\item $c(a)$ is twice-differentiable and $\lambda$-strongly convex. 
\item $G_j(a)$ is $L$-Lipschitz. 
\item $r(a)$ is strictly increasing, concave, and twice differentiable in $a$. 
\item The maximum level of effort $A_{\max} \le \frac{L}{\lambda}$ and can be implemented by the linear contract $\beta_{\max}$. 
\end{enumerate}
\end{assumption}
\begin{remark}
The third part of the assumption amounts to the diminishing returns intuition. The fourth part is without loss of generality, since the cost of implementing any effort level greater than $\frac{L}{\lambda}$ already outweighs the expected outcome, hence cannot be optimal to the principal. In studying the sample complexity, we will mostly treat $\lambda, L, \beta_{\max}$ as constants and focus on the dependence on $\eps$. 
\end{remark}

\subsection{Algorithm and Main Result}

The algorithm consists of two stages: learning the agent's cost and production functions, followed by solving a family of linear programs to determine an approximately optimal contract.

In the first stage, the principal discretizes the linear contract space $[0, \beta_{\max}]$ and queries each contract in the discretized space sufficiently many times. This process allows the principal to estimate the agent's production and cost functions with a small error. To be more specific, the principal can get an estimated production technology and cost value for a discretized action space. Note that the exact ``identitity" of the value of $a$ is unimportant, and that only the associated cost value and production function will be of relevance in the next stage. 

In the second stage, the principal uses these estimates to solve a family of linear programs (\cite{grossman1992analysis}), enabling the identification of an approximately optimal contract. In particular, for each action in the discretized action space, the principal can use a linear program to find the minimum payment contract for which the action $a$ is approximately incentive-compatible using the learnt production technologies and costs on the discretized space. Then, ranging over all such actions in the discretized space give a family of linear programs, and the principal can find an approximately optimal and approximately incentive-compatible contract by solving every linear program. 

The final step involves converting an approximately incentive-compatible contract into a exact incentive-compatible contract by utilizing an interesting observation in \cite{dutting2021complexity}. 

One potential drawback of this approach is that $\beta_{\max}$ may exceed 1, meaning the principal might need to offer contracts where the agent's payment exceeds the actual outcome. This issue arises only in the learning phase and does not affect the final contract selection.

\begin{theorem}
\label{thm:sample}
The principal can find a $\eps$-optimal contract using $\widetilde{O}(1 / \eps^6)$ samples. 
\end{theorem}

\begin{remark}
Note that the sample complexity is independent of the number of outcomes $m$. As a consequence, our scheme can be adapted to the case where the outcome space is continuous or infinite, e.g. $\cY = [0,Y]$, a continuous interval. Specifically, this involves discretizing the outcome space and applying our algorithm to the discretized outcome space. Bounding the discretization error in the outcome space requires some additional regularity conditions on the production function. In particular, if we write $G_y(a)$ to be the production function, then $G$ should not only be continuous in $a$T, but also continuous in $y$. The exact details are omitted. 
\end{remark}

\subsection{Discussion of Result}
Our approach combines the best of both the first-order approach and Grossman and Hart’s approach. While the first-order approach provides a way to characterize the agent’s best response, it often requires strong assumptions. However, we observe that the agent's best response can be precisely characterized when using linear contracts . This, in turn, enables the application of Grossman and Hart’s approach after the inverse learning phase, which do not require any extra assumptions. \szcomment{rewrite. }

Our algorithm design also reflects some insightful remarks from Holmström's Nobel lecture. Holmström noted that optimal contracts are often complex due to the ``imbalance between the agent’s one-dimensional action space and the principal’s high-dimensional control space (i.e., the contract space)". To address this challenge, the economics community developed the first-order approach, which enables the characterization of optimal contracts to be more tractable. Furthermore, Holmström highlighted that focusing on linear contracts leads to a more well-behaved model because "the one-dimensional effort space of the agent and the one-dimensional control space of the principal are evenly matched." Our algorithm builds on this observation. In the learning phase, we employ linear contracts to effectively control the agent’s actions, allowing for accurate estimation of the agent’s production and cost functions. In the second stage, we leverage the information gathered to compute an optimal contract without imposing additional assumptions. 

% Finally, let us make a comparison of our results with the recent work by \cite{chen2024bounded}. We make a comparison with the recent work~\cite{chen2024bounded}. 
% The approach and results in this work enjoy the following three advantages. First, our algorithm simpler, more straightforward algorithm involving much less complicated steps; Second, we only milder, more natural assumptions stated in a much more intuitive manner (we do not require any assumptions that require ``extra justification", our main assumption is just the diminishing returns assumption). Third, superior sample complexity that doesn't depend on the number of outcomes, together with a better dependence on $\varepsilon$. The advantage of \cite{chen2024bounded} is that they have a discussion on whether imposing bounds on the contracts leads to sub-optimality. However, their work also worked under the assumption of CDFC (though they did not explicitly point out as such), and it is well-known that for fixed-agent types, the optimal contract must be a threshold-type bonus contract (\cite{???}, hence it is unclear to the author of the necessity of their approach if they assumed CDFC. 

Finally, we compare our results with the recent work by \cite{chen2024bounded}. Our approach offers three key advantages. First, our algorithm is simpler and more straightforward, involving significantly fewer complicated steps. Second, we rely on milder, more natural assumptions that are stated intuitively. Specifically, we do not require assumptions that need "extra justification"—our main assumption is the standard diminishing returns condition. Third, our algorithm achieves superior sample complexity that does not depend on the number of outcomes and has a better dependence on $\varepsilon$. Furthermore, their work operate under a type of Rogerson's CDFC as in the his work \cite{rogerson1985first} (although they did not explicitly point out as such); as well as the FOSD assumption. It is well-known that under CDFC and MLRP, the optimal contract for fixed agent types is a threshold-type bonus contract (e.g. \cite{oyer2000theory}). It is unclear to the author whether their approach is entirely necessary given their assumptions; while FOSD is weaker than MLPR, the assumptions nevertheless seem to already suggests that there is some limit on the optimal contract's particular shape.

\section{Proofs for Continuity Conditions in \Cref{sec:manyOutcomes}}
\label{sec:proofLipschitz}
% \begin{proposition}
% %\label{prop:agentConcave}
% Under SDFC, for any monotone contract $s$, $u_\A(a; s)$ is strictly concave with respect to $a$. 
% \end{proposition}
\begin{proof} [Proof of \Cref{prop:agentConcave}]
Recall the agent's utility can be expressed as:
\begin{align*}
u_\A(a; s) &= \sum_{j=1}^m \Deltas_j G_j(a) - c(a).   
%&= \sum_j \Delta s_j G'_j(a) - c'(a)
\end{align*}
Taking the second derativie with respect to $a$:
\begin{align*}
u''_\A(a;s) &= \sum_{j=1}^m \Deltas_j G''_j(a) - c''(a) \\
%&\le \sum_{j=1}^m (s_j - s_{j-1})G_j''(a) - c''(a)\\
&< \sum_{j\in[m]} \Deltas_j c''(a) - c''(a) \\
&\le s_m c''(a) - c''(a) \\
&\le 0. 
\end{align*}
This completes the proof. 
\end{proof}

\begin{lemma}
The agent's response $a(s)$ is Lipschitz continuous with respect to the contract $s$ in the monotone contract domain, specifically, for any two monotone contracts $s^1, s^2$:
\[
\abs{a(s^1) - a(s^2)} < \frac{2L}{\lambda - \mu} \norm{s^1 - s^2}_{1}. 
\]
\end{lemma}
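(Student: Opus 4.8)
The plan is to characterize each best response through first-order optimality and then exploit the strong concavity established in \Cref{prop:agentConcave}. Fix two monotone bounded contracts $s^1, s^2$ and write $a^1 = a(s^1)$, $a^2 = a(s^2)$. Since $u_\A(\cdot; s)$ is strictly concave on the convex action set $\cA = [0,E]$, each $a^i$ is the unique maximizer, and --- crucially, to accommodate the possibility that the optimum sits at the boundary of $\cA$ --- I would characterize it by the variational inequality $u'_\A(a^i; s^i)(a - a^i) \le 0$ for every $a \in \cA$ rather than by stationarity. First I would upgrade the qualitative statement of \Cref{prop:agentConcave} to the quantitative bound $u''_\A(a;s) = \sum_j \Deltas_j G''_j(a) - c''(a) \le \mu s_m - \lambda \le -(\lambda - \mu)$, which uses $\Deltas_j \ge 0$ (monotonicity), $\sum_j \Deltas_j = s_m \le 1$ (boundedness), and \Cref{assumption:cont}; hence $u_\A(\cdot; s)$ is $(\lambda - \mu)$-strongly concave uniformly over the whole monotone bounded domain.

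Next I would instantiate the two variational inequalities at each other's optimizers, taking $a = a^2$ in the condition for $a^1$ and $a = a^1$ in the condition for $a^2$, and add them. After rearranging signs this yields the monotonicity estimate
\[
\bigl[u'_\A(a^1; s^1) - u'_\A(a^2; s^2)\bigr](a^1 - a^2) \ge 0.
\]
I would then split the derivative difference into a fixed-contract part and a fixed-action part,
\[
u'_\A(a^1; s^1) - u'_\A(a^2; s^2) = \bigl[u'_\A(a^1; s^1) - u'_\A(a^2; s^1)\bigr] + \bigl[u'_\A(a^2; s^1) - u'_\A(a^2; s^2)\bigr].
\]
Strong concavity controls the first bracket via $\bigl[u'_\A(a^1; s^1) - u'_\A(a^2; s^1)\bigr](a^1 - a^2) \le -(\lambda-\mu)(a^1 - a^2)^2$. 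Substituting into the monotonicity estimate and moving this term across gives
\[
(\lambda - \mu)\,\abs{a^1 - a^2}^2 \le \bigl[u'_\A(a^2; s^1) - u'_\A(a^2; s^2)\bigr](a^1 - a^2) \le \abs{u'_\A(a^2; s^1) - u'_\A(a^2; s^2)}\cdot \abs{a^1 - a^2}.
\]

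Finally I would bound the fixed-action perturbation term. Since the cost contributions cancel, $u'_\A(a^2; s^1) - u'_\A(a^2; s^2) = \sum_{j} (\Deltas^1_j - \Deltas^2_j) G'_j(a^2)$, and $\abs{G'_j} \le L$ gives $\abs{u'_\A(a^2; s^1) - u'_\A(a^2; s^2)} \le L \sum_j \abs{\Deltas^1_j - \Deltas^2_j}$. Writing $d_j = s^1_j - s^2_j$ with $d_0 = 0$, the telescoping identity $\Deltas^1_j - \Deltas^2_j = d_j - d_{j-1}$ and the triangle inequality give $\sum_j \abs{\Deltas^1_j - \Deltas^2_j} \le 2\sum_j \abs{d_j} = 2\norm{s^1 - s^2}_1$. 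Cancelling one factor of $\abs{a^1 - a^2}$ then yields $(\lambda-\mu)\abs{a^1-a^2} \le 2L\norm{s^1-s^2}_1$, which is the claim (with $\le$ in place of the stated $<$, the strict version following since at least one inequality is strict when $s^1 \neq s^2$). I expect the main obstacle to be precisely the boundary handling in the first step: naively invoking stationarity $u'_\A(a(s);s)=0$ can fail when the maximizer lies at $0$ or $E$, so the argument must route through the variational inequality, and one must also verify that the strong-concavity constant $\lambda-\mu$ is uniform over the monotone bounded contract class rather than contract-dependent.
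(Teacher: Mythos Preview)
Your proposal is correct and lands on the same bound via essentially the same mechanism as the paper: first-order optimality plus the gap between $c''$ and $G''_j$, combined with the $|G'_j|\le L$ Lipschitz bound and the telescoping estimate $\sum_j|\Deltas^1_j-\Deltas^2_j|\le 2\norm{s^1-s^2}_1$. The organization differs slightly. The paper assumes without loss of generality that $a(s^1)<a(s^2)$, which pins down the sign of the boundary first-order inequalities, and then separately bounds the cost-derivative gap $c'(b)-c'(a)\ge\lambda(b-a)$ and the payment-derivative gap $\sum_j \Deltas^2_j G'_j(b)-\sum_j \Deltas^1_j G'_j(a)\le \mu(b-a)+2L\norm{s^1-s^2}_1$. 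You instead package the same inputs as $(\lambda-\mu)$-strong concavity of $u_\A(\cdot;s)$ and run the standard strong-monotonicity argument through the variational inequality, which handles the endpoints $0,E$ without the WLOG ordering. Your framing is a bit cleaner on boundaries and makes the uniformity of the $(\lambda-\mu)$ constant over the monotone bounded class explicit; the paper's framing is more elementary and keeps the $c''$ and $G''$ contributions visible throughout. Substantively the two arguments coincide.
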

\begin{proof}
% Recall 
% \[
% a(s) = \arg\max_a u_\A(a; s). 
% \]
% By the previous lemma ??? , $u_\A(a; s)$ is concave with respect to $a$ for any fixed $s$. 

Let $a := a(s^1), b := a(s^2)$ be the best responses. Without loss of generality assume $a < b$. 

By the previous \Cref{prop:agentConcave}, the optimal response must satisfy the first-order conditions. 
\begin{align*}
\sum_{j\in [m]} s^1_j f'_j(a) &\le c'(a) \\
\sum_{j\in[m]} s^2_j f'_j(b) &\ge c'(b). 
\end{align*}
The above expressed in terms of $G$ and $\Deltas$ becomes:
\begin{align*}
\sum_{j\in[m]} \Deltas^1_j G'_j(a) &\le c'(a) \\
\sum_{j\in [m]} \Deltas^2_j G'_j(b) &\ge c'(b). 
\end{align*}
Note that by strong convexity of $c$, the difference between the right-hand side can be lower bounded: 
\[
c'(b) - c'(a) \ge \lambda (b-a). 
\]
% Since $G$ is $\mu$-smooth:
% \[
% \abs{G'_j(a) - G'_j(b)} \le \mu (b-a). 
% \]
The difference between the left-hand side can be upper bounded: 
\begin{align*}
\sum_{j\in[m]} \abs{ \Delta s^1_j G'_j(a) - \Delta s^2_j G'_j(b)  } &= \sum_{j\in[m]} \abs{\Delta s^1_j G'_j(a) - \Delta s^2_j G'_j(a) - \Delta s^2_j (G'_j(b) - G'(a))} \\
&\le \sum_{j\in[m]} \abs{\Delta s^1_j - \Delta s^2_j} \abs{G'_j(a)} + \sum_{j\in[m]} \abs{\Delta s^2_j} \abs{G'_j(b) - G'_j(a)} \\
&\le 2L \norm{s^1 - s^2}_1 + \mu\abs{a-b}.
\end{align*}

Therefore, we must have:
\begin{align*}
2L\norm{s^1 - s^2}_1 + \mu\abs{a-b} \ge \lambda \abs{a-b}
\end{align*}
which translates to:
\[
\abs{a-b} < \frac{2L\norm{s^1 - s^2}_1}{\lambda - \mu}. 
\]
Hence, when $s$ is restricted to be monotone, the response $a(s)$ is Lipschitz with respect to the contract $s$. 
\end{proof}

\begin{theorem}
As a function of $s$, the utility of the principal $u_\P(s)$ is Lipschitz, specifically:
\[
\abs{u_\P(s^1) - u_\P(s^2)} \le \left( \frac{4L^2}{\lambda - \mu} + 2 \right). \norm{s^1 - s^2}_1. 
\]
\end{theorem}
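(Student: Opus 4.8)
The plan is to bound the difference in the principal's utility by decomposing it into two contributions: the change coming from the contract specification itself, and the change coming from the induced shift in the agent's best response. Let me write $a := a(s^1)$ and $b := a(s^2)$ for the two best responses. Recall that $u_\P(s) = \sum_{j\in[m]} (\pi_j - s_j) f_j(a(s))$, or, using Abel's lemma, it is cleaner to work with the complementary distribution form $u_\P(s) = \sum_{j\in[m]} \pi_j f_j(a(s)) - \sum_{j\in[m]} \Deltas_j G_j(a(s)) = r(a(s)) - \sum_{j\in[m]} \Deltas_j G_j(a(s))$. I would add and subtract an intermediate term to split $u_\P(s^1) - u_\P(s^2)$ into a piece where only the contract changes (holding the action fixed) and a piece where only the action changes (holding the contract fixed).

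First I would control the action-shift piece. Holding the contract $s^2$ fixed and comparing the action $a$ against $b$, the relevant quantity is the difference $r(a) - r(b)$ together with $\sum_j \Deltas^2_j (G_j(a) - G_j(b))$. Since $r$ has derivative bounded by $\sum_j \pi_j |f_j'| $ and each $G_j$ is $L$-Lipschitz with $\sum_j \Deltas^2_j \le s^2_m \le 1$ for a bounded monotone contract, each of these is controlled by a constant multiple of $L \cdot |a-b|$. The key input here is the previous lemma, which gives $|a - b| < \frac{2L}{\lambda - \mu}\norm{s^1 - s^2}_1$. Substituting this bound turns the action-shift contribution into something of order $\frac{L^2}{\lambda-\mu}\norm{s^1 - s^2}_1$, which is where the $\frac{4L^2}{\lambda-\mu}$ term in the statement originates.

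Next I would control the contract-change piece, holding the action fixed. Here only the term $\sum_j \Deltas_j G_j(\cdot)$ depends on $s$, and since $G_j \in [0,1]$, the difference is bounded by $\sum_j |\Deltas^1_j - \Deltas^2_j|$, which in turn is at most a constant times $\norm{s^1 - s^2}_1$ (the $\Deltas$ differences telescope into coordinate differences of $s$). This accounts for the additive $+2$ constant in the Lipschitz coefficient. Adding the two contributions and collecting the constants yields the claimed bound $\bigl(\frac{4L^2}{\lambda - \mu} + 2\bigr)\norm{s^1 - s^2}_1$.

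The main obstacle I anticipate is bookkeeping rather than conceptual: the decomposition must be done carefully so that the Lipschitz constant of $r$ and the contributions from the $G_j$ terms are tracked correctly, and one must confirm that boundedness ($s_j \le 1$) and monotonicity are exactly what is needed to keep $\sum_j \Deltas_j \le 1$ so the constants come out as stated. A subtle point to verify is that $u_\P(s) = \max_a u_\P(a;s)$ is evaluated at the unique best response, so that the envelope structure lets me treat $a(s)$ as the single relevant action; the strict concavity from \Cref{prop:agentConcave} guarantees this best response is well-defined and makes the previous Lipschitz lemma applicable. Once those are in place, the estimate is a routine triangle-inequality computation.
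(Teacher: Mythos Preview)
Your proposal is correct and follows essentially the same route as the paper: both arguments rewrite $u_\P(s)=r(a(s))-\sum_j \Deltas_j G_j(a(s))$, invoke the lemma $\abs{a(s^1)-a(s^2)}\le \frac{2L}{\lambda-\mu}\norm{s^1-s^2}_1$, and bound the remaining terms via the $L$-Lipschitzness of $G_j$ together with $\sum_j \Deltas_j\le 1$ and $\sum_j|\Deltas^1_j-\Deltas^2_j|\le 2\norm{s^1-s^2}_1$. The only cosmetic difference is that the paper organizes the two pieces as ``expected outcome'' versus ``expected payment'' before inserting the intermediate term, whereas you insert the intermediate term first and label the pieces ``action-shift'' versus ``contract-change''; the resulting estimates and constants are identical.
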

\begin{proof}
%The proof proceeds as follows. The response $a(s)$ is Lipschitz in $s$. The production function is also Lipschitz. Thus the utility to the principal is Lipschitz. 

Recall the utility of the principal can be expressed as:
\begin{align*}
u_\P(s) &= \sum_j (\pi_j - s_j) f_j(a(s)) \\
&= \sum_{j\in[m]} \pi_j f_j(a(s)) - \sum_{j\in[m]} s_j f_j(a(s))
\end{align*}
Consider the two terms separately. The first term is the expected outcome, and can be expressed as:
\begin{align*}
\sum_j \pi_j f_j(a(s)) &= \sum_{j\in[m]} (\pi_j - \pi_{j-1}) G_j(a(s)) 
\end{align*}
which can been seen to be $\frac{2L^2}{\lambda - \mu}$-Lipschitz. This follows from the fact that $a(s)$ is $\frac{2L}{\lambda - \mu}$-Lipschitz with respect to $s$, and that $G_j(a)$ is $L$ Lipschitz with respect to $a$. 

Next, consider the second term $\sum_{j\in[m]} s_j f_j(a(s))$, which is the expected payment to the agent. Consider two contracts $s^1, s^2$ and let the best response be $a, b$ respectively. 
%By Abel's summation,
% \begin{align*}
% \sum s^1_j f_j(a) &= \sum \Delta s^1_j G_j(a). 
% \end{align*}
Then the difference between the payments can be upper bounded: 
\begin{align*}
\abs{ \sum_{j\in [m]} s^1_j f_j(a) - \sum_{j\in [m]} s^2_j f_j(a) } &\le \sum_{j\in [m]} \abs{ \Delta s^1_j G_j(a) - \Delta s^2_j G_j(b) } \\
&= \sum_{j\in [m]} \abs{ \Delta s^1_j G_j(a) - \Delta s^1_j G_j(b) - (\Delta s^2_j-\Delta s^1_j) G_j(b)}  \\
%\text{alt} &= \abs{\Delta s^1_j G_j(a) - \Delta s^2_j G_j(a) + \Delta s^2_j( G_j(b) - G_j(a) ) } \\
&\le \sum_{j\in [m]} \abs{\Deltas_j^2}\abs{G_j(b) - G_j(a)} + \sum_{j\in [m]} \abs{\Deltas_j^1 - \Deltas_j^2}\abs{G_j(b)} \\
&\le \paren{\frac{2L^2}{\lambda - \mu}}\norm{s^1 - s^2}_1 + 2 \norm{s^1 - s^2}_1 \\
&= \left( \frac{2L^2}{\lambda - \mu} + 2 \right) \norm{s^1 - s^2}_1
\end{align*}
In the fourth line, we use the fact that $\abs{G_j(b) - G_j(a)} \le L\abs{b-a}$ and $\abs{b-a} \le (\frac{2L}{\lambda - \mu})\norm{s^1 - s^2}_1$ by the previous lemma. This shows the second term is $\left(\frac{2L^2}{\lambda - \mu} + 2\right)$-Lipschitz. 

The utility to the principal is the expected outcome minus payment, hence must be $\left(\frac{4L^2}{\lambda - \mu} + 2\right)$-Lipschitz. 
\end{proof}

\section{Proof Sketch for \Cref{thm:sample}}
\label{sec:proofsketchIdentical}
In this section we give a proof sketch for \Cref{thm:sample} and state the key lemmas in the proof. We break the proof into several steps. 
\paragraph {\bf Step 1. } In the first stage of the algorithm, the principal queries linear contracts from discretized points inside some interval. Under a very natural diminishing returns assumption, it can be shown that the agent's response will be continuously increasing as the linear contract increases. 

Let us denote $\beta{(i)} = i\cdot \eps$, and let $a^{(i)}$ to represent the induced action under the linear contract $\beta^{(i)}$. We can show that the induced actions are sufficiently ``dense" in the action space. 

% \begin{lemma*}
% $a^{(i+1)} - a^{(i)} < \frac{r'(0)}{\lambda}(\beta' - \beta)$. 
% \end{lemma*}

\begin{restatable}{lemma}{lemmaDiscretizedActionClose}
Let $\beta' > \beta$, then $a(\beta') - a(\beta) < \frac{r'(0)}{\lambda} (\beta' - \beta)$. 
\end{restatable}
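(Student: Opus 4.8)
The plan is to compare the agent's first-order optimality conditions at the two contracts and then pit the strong convexity of $c$ against the concavity of $r$. Writing $\phi_\beta(a) = \beta r(a) - c(a)$ for the agent's objective, note that $\phi_\beta$ is concave (since $r$ is concave and $c$ is convex), so its maximizer $a(\beta)$ over $[0,E]$ is characterized by the variational inequality
\[
\left( \beta r'(a(\beta)) - c'(a(\beta)) \right)\left( x - a(\beta) \right) \le 0 \quad \forall x \in [0,E].
\]
I would work with this form rather than the three-case split (interior vs.\ clamped at $0$ or $E$) from the previous lemma, because it lets me treat all regimes in one stroke. Handling the boundary/clamped responses cleanly is exactly the place where a naive stationarity-equation argument would get bogged down, so this is the step I expect to be the main obstacle.

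Writing $a := a(\beta)$ and $a' := a(\beta')$, I would instantiate the variational inequality at $\beta$ with test point $x = a'$ and at $\beta'$ with test point $x = a$, and subtract the two. Using the monotonicity $a' \ge a$ furnished by the previous lemma, the cross terms collapse and yield the single comparison
\[
c'(a') - c'(a) \le \beta' r'(a') - \beta r'(a).
\]
This inequality is the crux of the argument: it replaces the two separate stationarity equations (which hold only in the interior) by one relation that remains valid no matter whether either response sits at a boundary.

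From here the estimate is routine. The left-hand side is bounded below by $\lambda$-strong convexity, $c'(a') - c'(a) \ge \lambda (a' - a)$. For the right-hand side I would decompose
\[
\beta' r'(a') - \beta r'(a) = \beta'\left( r'(a') - r'(a) \right) + (\beta' - \beta)\, r'(a),
\]
and observe that concavity of $r$ gives $r'(a') \le r'(a)$ (so the first term is nonpositive, as $\beta' \ge 0$), while $r'$ decreasing together with $a \ge 0$ gives $r'(a) \le r'(0)$. Hence the right-hand side is at most $(\beta' - \beta) r'(0)$. Combining, $\lambda (a' - a) \le (\beta' - \beta) r'(0)$, which rearranges to the claimed bound $a' - a \le \frac{r'(0)}{\lambda}(\beta' - \beta)$.

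The only loose end is the strict inequality, since the chain above is a priori only $\le$. I would recover the strict sign from $\beta' - \beta > 0$: either $a' = a$, in which case the right-hand side is strictly positive and the inequality is strict trivially, or $a' > a$, in which case the response has genuinely moved and at least one link in the chain (e.g.\ $r'(a) < r'(0)$ once $a$ leaves the origin, or strictness in the concavity/convexity comparison over the interval $[a,a']$) is strict. I regard this bookkeeping, rather than any of the calculus, as the only delicate point.
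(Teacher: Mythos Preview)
Your argument is correct and takes a genuinely different route from the paper. The paper restricts (``without loss of generality'') to the interior regime where the first-order equality $c'(a)/r'(a)=\beta$ holds, differentiates this relation to obtain $\beta'(a)\ge \lambda/r'(0)$, and then invokes the inverse function rule to conclude $a'(\beta)\le r'(0)/\lambda$. Your approach instead subtracts the two variational inequalities to obtain the single comparison $c'(a')-c'(a)\le \beta' r'(a')-\beta r'(a)$ and finishes with the decomposition you describe. The payoff of your method is precisely what you anticipated: it handles the boundary regimes $a(\beta)\in\{0,E\}$ in one stroke, whereas the paper's implicit-differentiation argument only literally covers the interior case and relies on a ``WLOG'' that would, strictly speaking, require a separate check when one or both contracts clamp the response. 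Conversely, the paper's calculation makes the derivative bound $a'(\beta)\le r'(0)/\lambda$ explicit, which is a slightly sharper local statement.

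On the strict inequality: the paper's own proof in fact only delivers the weak inequality (it obtains $a'(\beta)\le r'(0)/\lambda$ and integrates), so your remaining ``bookkeeping'' concern is no worse than the original. Your case split---trivial strictness when $a'=a$, and strictness via $r'(a')<r'(a)$ or $r'(a)<r'(0)$ when $a'>a$---is adequate, and arguably more careful than what the paper records.
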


The production function associated with each contract can be estimated directly by observing the realized production outcome; by applying standard concentration results, the errors in the estimates can be controlled. When each linear contract in the discretized space is queried enough times, the principal can get an accurate estimate for the production technology function at each point. 
\begin{restatable}{lemma}{lemmaEstimateAccurate}
% Fix the linear contract $\beta^\i \in \C_D$. Suppose the contract has been queried for $N = \frac{\ln(2/\delta)}{\eps^2}$ times. For any $j \in [m]$, with probability $1 - \delta$,
% \[
% \abs{ \hatG_j(a^\i) - G_j(a^\i) } \le \eps, 
% \]
% and
% \[
% \abs{ r(a^\i) - \hatr(a^\i)} \le \eps. 
% \]
Fix the linear contract $\beta^\i \in \C_D$ and let the induced action be $a^\i$. Suppose the contract has been queried for $N = \frac{\ln(2/\delta)}{\eps^2}$ times. With probability $1 - \delta$,
\[
\abs{ \hatG_j(a^\i) - G_j(a^\i) } \le \eps, ~~ \abs{ r(a^\i) - \hatr(a^\i)} \le \eps. 
\]
\end{restatable}

However, the cost function cannot be estimated directly. Here, an application of Myerson's lemma is applied to obtain an accurate estimate of the cost function. 

\begin{restatable}{lemma}{lemmaMyerson}
Let $a = a(\beta)$ be the action induced by $\beta$, $a' = a(\beta')$ be the action induced by $\beta'$. Then the following holds:
\[
\beta' [r(a) - r(a')] \le c(a) - c(a') \le \beta [r(a) - r(a')]
\]
\end{restatable}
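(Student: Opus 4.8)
The plan is to derive both inequalities directly from the revealed-preference (incentive compatibility) conditions that define $a(\beta)$ and $a(\beta')$. Recall that under a linear contract $\beta$ the agent's expected utility from taking action $a$ is $\beta r(a) - c(a)$, and the earlier lemma (\lemmaLinearSpans) guarantees that the maximizer $a(\beta)$ is well-defined, so $a$ and $a'$ are genuine best responses to $\beta$ and $\beta'$ respectively. No new machinery beyond optimality of each response is needed; the entire content is a two-line exchange argument.

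\textbf{Key steps.} First I would invoke optimality of $a = a(\beta)$ against the alternative action $a'$: since $a$ maximizes the agent's utility under $\beta$,
\[
\beta r(a) - c(a) \ge \beta r(a') - c(a'),
\]
which rearranges immediately to $c(a) - c(a') \le \beta\,[r(a) - r(a')]$, giving the upper bound. Second, I would apply the symmetric optimality of $a' = a(\beta')$ against the alternative action $a$:
\[
\beta' r(a') - c(a') \ge \beta' r(a) - c(a),
\]
which rearranges to $\beta'\,[r(a) - r(a')] \le c(a) - c(a')$, giving the lower bound. Chaining the two yields exactly
\[
\beta'\,[r(a) - r(a')] \le c(a) - c(a') \le \beta\,[r(a) - r(a')].
\]

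\textbf{Remarks on difficulty.} There is no real obstacle here: the result is a standard Myerson-style envelope/exchange argument, and both bounds fall out of the defining optimality of the best responses. The only points worth flagging are that the argument never requires assuming an ordering between $\beta$ and $\beta'$ (the two inequalities hold as stated regardless of sign, because each is just a revealed-preference comparison), and that consistency of the chain is underwritten by monotonicity of $a(\cdot)$ and the fact that $r$ is increasing, so that $r(a) - r(a')$ carries the same sign as $\beta - \beta'$. I would close by noting that this lemma is precisely what lets the principal sandwich the unknown cost increments $c(a^{(i)}) - c(a^{(i-1)})$ between neighboring linear-contract values, which is the basis for the $c^{\LCB}$ and $c^{\UCB}$ estimates constructed in \Cref{algo:polynomialForIdentical}.
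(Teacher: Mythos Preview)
Your proposal is correct and matches the paper's proof essentially line for line: the paper also writes the two incentive-compatibility inequalities $\beta r(a) - c(a) \ge \beta r(a') - c(a')$ and $\beta' r(a') - c(a') \ge \beta' r(a) - c(a)$ and then simply rearranges. Your additional remarks on sign consistency and the downstream use in the $c^{\LCB}/c^{\UCB}$ construction are accurate and go slightly beyond what the paper spells out, but the core argument is identical.
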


% \begin{lemma*}
% Let $a = a(\beta)$ be the action induced by $\beta$, $a' = a(\beta')$ be the action induced by $\beta'$. Then the following holds:
% \[
% \beta' [r(a) - r(a')] \le c(a) - c(a') \le \beta [r(a) - r(a')]
% \]
% \end{lemma*}

Utilizing the above, one can obtain confidence bounds on the cost function values for the agent. 

\begin{restatable}{lemma}{lemmaActionConfidenceInterval}
Let ${c}^\LCB(a^\i), {c}^\UCB(a^\i)$ be as defined in the algorithm. Then for any $i$
\[
c(a^\i) \in \left[ {c}^\LCB(a^\i),   {c}^\UCB(a^\i) \right]. 
\]
Further 
\[
{c}^\UCB(a^\i) - {c}^\LCB(a^\i) \le 9\eps. 
\]
\end{restatable}

\paragraph {\bf{Step 2. }} In the second stage of the algorithm, the principal solves a family of LPs using the estimated parameters. Since there may be small errors in the estimates, the computed contract may not be exactly incentive compatible. However, we will be able to bound the `incentive gap' by carefully analyzing the deviation terms. Note that we also carefully left the apprriopate adjustments in the linear programs to adjust for the deviation in the confidence intervals. 

Let us consider the true optimal contract $s^*$, let the optimal induced action under $s^*$ by $a^*$. We can show that the optimal contract $s^*$ is at least feasible solution to the action $\bar{a^*}$, which is the closest action to $a^*$ in the discretized space. 
\begin{restatable}{lemma}
{lemmaSStarFeasible}
$s^*$ is a feasible solution to $\LP(\bar{a^*})$. 
\end{restatable}

By carefully analyzing the deviation terms, we can show $a^\i$ will be $\delta$-IC under any solution to the program $\LP(a^\i)$. 
\begin{restatable}{lemma}
{lemmadeltaIC}
\label{lemma:deltaIC}
Consider the program $\LP(a^\i)$ and let $s$ be a solution. Then $a^\i$ is $\delta$-IC under $s$ with $\delta = 22\eps$. 
\end{restatable}{lemma}

Now, let $a^*_D$ be the action with the highest objective among the family of $\LP(a^\i)$, and let $s^*_D$ be the solution to $\LP(a^*_D)$. We can show this solution is very close to the optimal value achieved by the optimal contract. 
\begin{restatable}{lemma}
{lemmaOPTbound}
\label{prop:approxICandOPT}
Under the contract $s_D^*$, the action $a_D^*$ is $\delta$-IC with $\delta = 22\eps$. Moreover, $\OPT \le u_\P(a^*_D; s^*_D) + 6\eps$. 
\end{restatable}

\paragraph {\bf Step 3. } Finally, we apply the interesting observation in \cite{dutting2021complexity}. 
\begin{lemma*}[\cite{dutting2021complexity}]
Let $(s,a)$ be a $\delta$-IC contract. Then $s' := (1-\sqrt{\delta})s + \sqrt{\delta} \pi$ achieves utility at least $(1-\sqrt{\delta}) u_\P(a; s) - (\sqrt{\delta} - \delta)$. 
\end{lemma*}
This allows the principal to convert any approximately-IC contracts to exact IC contracts, and the final result is obtained.

\section{Conclusion and Discussion}
\label{sec:conclusion}

% In this work, we studied the first-order approach (FOA) in contract design from an online learning perspective. The FOA has been a central tool in economics for solving principal-agent problems, particularly in characterizing the agent’s incentive compatibility through first-order conditions. Our goal was to explore how the core elements of the FOA can be applied in online learning settings. Specifically, we investigated how the characterization of the agent’s best response via the first-order condition influences algorithm design and identified conditions under which this approach is valid.

While recently there are many works in computer science studying the economic problem of contract design, a modelling choice has to be made: continuous effort space mode, or the discrete action space model. Most theoretical computer science (TCS) approaches to contract design adopt discrete action space models to introduce combinatorial structures that make the problem more tractable or computationally interesting. A recent survey \cite{dutting2024algorithmic} also suggests that discrete models are more natural for computer scientists. While this may hold for complexity theorists, we argue that for machine learning researchers, the continuous action space model is more natural and practical. While the distinction isn’t strictly binary, continuous models align more closely with real-world learning scenarios and allow for more elegant, generalizable algorithms, as demonstrated in our work. In addition to being the default in economics, the continuous action space model serves as a natural bridge between online learning and contract design. We’ve demonstrated that leveraging continuous action spaces, combined with insights from the FOA, leads to simpler, more principled, and broadly applicable algorithms.

Choosing the continuous action space model, we studied three key problems in online contract design, presenting three main results. First, inspired by the FOA, we identified conditions under which the problem of learning optimal contracts with heterogeneous agents can be reduced to Lipschitz bandits. Second, we showed that the problem of learning the optimal linear contract is equivalent to dynamic pricing with an unknown demand curve. Third, we proposed an algorithm with polynomial sample complexity for learning the optimal contract when dealing with identical agents. All results are based on the continuous effort model, which is the default in economics but has been largely overlooked in recent computer science literature. Our findings highlight that the continuous effort model enables the application of a broader set of techniques in online contract design, particularly those grounded in the first-order approach from economics. This perspective also provides new insights for designing online learning algorithms. 

\newpage
%Bibliography
\bibliographystyle{ACM-Reference-Format}  
\bibliography{references}  

\appendix

\section{Proofs for Contracting with Identical Agents in \Cref{sec:identical}}
\subsection{The estimated paramters are accurate}
% \begin{assumption}
%     We shall make the simplifying assumption that the maximum linear contract, i.e. the contract where $\beta = 1$ and the principal transfers all her earnings to the agent, induces the maximum action available to the agent. 
% \end{assumption}

We first show the response of the agent is well-behaved under linear contracts. 
\begin{restatable}{lemma}{lemmaLinearSpans}
Let $a(\beta)$ denote agent's best response when the linear contract $\beta$ is posted. Then $a(\beta)$ is increasing and continuous in $\beta$. 
\end{restatable}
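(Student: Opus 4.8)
The plan is to show both monotonicity and continuity of the best response $a(\beta)$ by exploiting the strong convexity of the cost function $c$ and the first-order characterization of the agent's optimal response under a linear contract. Under a linear contract $\beta$, the agent's utility is $u_\A(a;\beta) = \beta\, r(a) - c(a)$, where $r(a) = \sum_j \pi_j f_j(a)$ is the expected outcome. Since $r$ is concave and $c$ is $\lambda$-strongly convex, the objective $\beta\, r(a) - c(a)$ is strictly concave in $a$ for every $\beta \ge 0$, so the best response $a(\beta)$ is unique and is characterized by the first-order condition $\beta\, r'(a(\beta)) = c'(a(\beta))$ (treating boundary cases at $a=0$ and $a=E$ separately).

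First I would establish monotonicity. Take $\beta_1 < \beta_2$ with responses $a_1 = a(\beta_1)$ and $a_2 = a(\beta_2)$. I would use the standard revealed-preference / exchange argument: by optimality, $\beta_1 r(a_1) - c(a_1) \ge \beta_1 r(a_2) - c(a_2)$ and $\beta_2 r(a_2) - c(a_2) \ge \beta_2 r(a_1) - c(a_1)$. Adding these two inequalities and cancelling the cost terms yields $(\beta_2 - \beta_1)\bigl(r(a_2) - r(a_1)\bigr) \ge 0$, so $r(a_2) \ge r(a_1)$. Since $r$ is strictly increasing, this forces $a_2 \ge a_1$, giving monotonicity. (Strict monotonicity on the interior follows from strict concavity, but the statement only asks for increasing, so the weak version suffices.)

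Next I would establish continuity. The cleanest route is to work through the first-order condition. Define $\phi(a) = c'(a) - \beta\, r'(a)$; monotonicity in $\beta$ is already in hand, so it remains to rule out jumps. Suppose, for contradiction, that $a(\cdot)$ jumps at some $\beta_0$: by monotonicity the one-sided limits $a^- = \lim_{\beta \uparrow \beta_0} a(\beta)$ and $a^+ = \lim_{\beta \downarrow \beta_0} a(\beta)$ exist with $a^- \le a^+$, and a jump means $a^- < a^+$. Passing to the limit in the first-order condition $\beta r'(a(\beta)) = c'(a(\beta))$ from each side (using continuity of $c'$ and $r'$, which hold since $c$ and $r$ are twice differentiable) would force $\beta_0 r'(a) = c'(a)$ to hold for both $a = a^-$ and $a = a^+$. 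But strict concavity of $u_\A(\cdot;\beta_0)$ means its derivative $\beta_0 r'(a) - c'(a)$ is strictly decreasing in $a$ and hence has a unique zero, contradicting the existence of two distinct stationary points. This rules out jumps, so $a(\cdot)$ is continuous.

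The main obstacle I anticipate is handling the boundary of the action space $[0,E]$ cleanly: the first-order condition is an interior characterization, and when the best response sits at $a=0$ or $a=E$ the equation $\beta r'(a) = c'(a)$ becomes an inequality. I would treat these as follows — for small $\beta$ the response may stick at $0$ until $\beta$ crosses $c'(0)/r'(0)$, and for large $\beta$ it saturates at $E$; in both regimes the response is constant (hence trivially continuous and monotone), and the gluing of the interior regime to the boundary regimes is continuous because the first-order condition holds with equality exactly at the transition points. Alternatively, one could bypass the case analysis entirely by invoking Berge's maximum theorem (the argmax correspondence of a continuous objective over a compact action set is upper hemicontinuous, and single-valuedness from strict concavity upgrades this to continuity), which I would mention as the more robust high-level argument while keeping the explicit first-order-condition computation as the constructive proof.
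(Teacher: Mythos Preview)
Your proposal is correct and covers all the necessary points, including the boundary cases; but the route differs from the paper's in a couple of instructive ways.

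For monotonicity, the paper works directly through the first-order condition: in the interior regime the best response satisfies $c'(a)/r'(a) = \beta$, and since $c'$ is increasing while $r'$ is decreasing, the ratio $c'/r'$ is a strictly increasing continuous function of $a$, so its inverse $a(\beta)$ is increasing. Your revealed-preference exchange argument is more elementary and more robust --- it does not even require differentiability of $r$ or $c$, only that $r$ is strictly increasing --- whereas the paper's argument is shorter given the smoothness assumptions already in force.

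For continuity, the two arguments are really the same idea in different clothing. The paper simply observes that $a(\beta)$ in the interior is the inverse of the continuous strictly monotone map $a \mapsto c'(a)/r'(a)$, hence continuous; your jump-contradiction argument unpacks exactly why that inverse cannot have a jump. Your mention of Berge's maximum theorem is a clean high-level alternative the paper does not invoke. The boundary analysis (sticking at $0$ for small $\beta$, saturating at $E$ for large $\beta$, with continuous gluing at the thresholds $c'(0)/r'(0)$ and $c'(E)/r'(E)$) matches the paper's three-case breakdown exactly.
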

\begin{proof}
Recall $r(a)$ is the expected production of the agent. Assume the principal posts a linear contract $\beta$, the agent then takes the action $a = a(\beta)$ that maximizes the following:
\[
\max_a \beta r(a) - c(a). 
\]
Note that this is a concave function in $a$. 
The optimal solution then satisfies the following. 
\begin{enumerate}
\item If $\beta < \frac{c'(0)}{r'(0)}$, the agent's best response is $a(\beta) = 0$,
\item If $\beta > \frac{c'(E)}{r'(E)}$, the agent's best response is $a(\beta) = E$, 
\item Otherwise, the agent's best response is the unique actions satisfying:
\[
\frac{c'(a)}{r'(a)} = \beta. 
\]
\end{enumerate}
Note $c'$ is increasing, $r'$ is decreasing and both $c'$ and $r'$ are continuous. Hence, the agent's response is continuous and increasing with respect to $\beta$. The principal is then able to implement every action using linear contracts. 
\end{proof}

When the contract space is appropriately discretized, the induced actions are also close to each other. In other words, $a(\beta)$ is Lipschitz with respect to $\beta$. 

\lemmaDiscretizedActionClose*
\begin{proof}
With out loss of generality let us assume $\frac{c'(0)}{r'(0)} < \beta < \beta' < \frac{c'(E)}{r'(E)}$. Recall at each linear contract $\beta \in [\frac{c'(0)}{r'(0)}, \frac{c'(E)}{r'(E)}]$, the induced action $a$ satisfies:
\[
\frac{c'(a)}{r'(a)} = \beta. 
\]
Consider $\beta$ as a function of $a$, then
\begin{align*}
\beta'(a) &= c'(a)\cdot \frac{-r''(a)}{(r'(a))^2} + c''(a)\cdot\frac{1}{r'(a)} \\
&\ge c''(a)\cdot\frac{1}{r'(a)} \\
&\ge \frac{\lambda}{ r'(0) }. 
\end{align*}
Hence, by the inverse function rule:
\[
a'(\beta) \le \frac{r'(0)}{\lambda}. 
\]
\szcomment{If only assume $r'' < \mu$, then $c'(a)\cdot -\mu/r'^2 + \lambda/L$}
This is enough to complete the proof. 
\end{proof}

By querying a specific contract $\beta^\i$ enough times, the principal can learn an accurate estimate of the production technology $f_j(a^\i)$ (and $G_j(a^\i)$).

\lemmaEstimateAccurate*
\begin{proof}
Follows directly from the Dvoretzky–Kiefer–Wolfowitz (DKW) inequality. 
\end{proof}

There is another question we need to solve: how can the principal infer the cost of the agent's action under some contract $\beta$? This can be solved using an adaptation of Myerson's lemma.

\lemmaMyerson*
\begin{proof}
By incentive compatibility:
\begin{align*}
\beta r(a) - c(a) &\ge \beta r(a') - c(a') \\
\beta' r(a') - c(a') &\ge \beta' r(a) - c(a)
\end{align*}
The lemma then follows by rearranging terms. 
\end{proof}

The below lemma characterizes the confidence interval for the cost of the action $a(\beta^\i)$. 

% \begin{restatable}{lemma}{lemmaActionCI}
% Let 
% \begin{align*}
%     {c}^\LCB(a^\i) = \sum_{j=1}^i \beta^{(j-1)} (\hat{r}(a^\j) - \hat{r} (a^{(j-1)})) - 3\eps, \\
%     {c}^\UCB(a^\i) = \sum_{j=1}^i \beta^\j (\hat{r}(a^\j) - \hat{r}(a^{(j-1)})) + 3\eps
% \end{align*}
% be as defined in the algorithm. Then for any $i$
% \[
% c(a^\i) \in \left[ {c}^\LCB(a^\i),   {c}^\UCB(a^\i) \right]. 
% \]
% Further 
% \[
% {c}^\UCB(a^\i) - {c}^\LCB(a^\i) \le 9\eps. 
% \]
% \end{restatable}
\lemmaActionConfidenceInterval*
\begin{proof}
For any $i \ge 1$, the following holds. 
\begin{align*}
c(a^\i) &= \sum_{j=1}^i c(a^\j) - c(a^{(j-1)}) \\
&\le \sum_{j=1}^i \beta^\j (r(a^\j) - r(a^{(j-1)})) \\
&= \beta^\i r(a^\i) + \sum_{j=2}^i (\beta^{(j-1)} - \beta^\j) r(a^{(j-1)}) - \beta_1 r(a^{(0)}) \\
&\le \beta^\j \hat{r}(a^\j) + \sum_{j=2}^i (\beta^{(j-1)} - \beta^\j) \hat{r}(a^{(j-1)}) - \beta_1 \hat{r}(a^{(0)})+ 3\eps \\
&\le \sum_{j=1}^i \beta^\j (\hat{r}(a^\j) - \hat{r}(a^{(j-1)})) + 3\eps. 
\end{align*}
Similarly we can obtain a lower bound for $c(a^\i)$. 
\begin{align*}
c(a^\i) &= \sum_{j=1}^i c(a^\j) - c(a^{(j-1)}) \\
&\ge \sum_{j=1}^i \beta^{(j-1)} (r(a^\j) - r(a^{(j-1)})) \\
&= \beta^{(i-1)} r(a^\i) + \sum_{j=2}^i (\beta^{(j-2)} - \beta^{(j-1)}) r(a^{(j-1)}) - \beta^{(0)} r(a^{(0)}) \\
&\ge \beta^{(i-1)} \hat{r}(a^\i) + \sum_{j=2}^i (\beta^{(j-2)} - \beta^{(j-1)}) \hat{r}(a^{(j-1)}) \beta^{(0)} - \beta^{(0)}\hatr(a^{(0)}) - 3\eps\\
&\ge \sum_{j=1}^i \beta^{(j-1)} (\hat{r}(a^\j) - \hat{r} (a^{(j-1)})) - 3\eps.  \\
\end{align*}

Further,
\begin{align*}
    {c}^\UCB(a^\i) - {c}^\LCB(a^\i) &\le 
    \sum_{j=1}^i (\beta^\j - \beta^{(j-1)}) (\hatr(a^\j) - \hatr(a^{(j-1)})) + 6\eps \\
    &\le \sum_{j=1}^i (\beta^\j - \beta^{(j-1)}) 3\eps + 6\eps \\
    &\le 3\eps + 6\eps \\
    &\le 9\eps,
\end{align*}
where in the second line we use the fact that:
\begin{align*}
\abs{\hatr(\beta^\j) - \hatr(\beta^{(j-1)})} &\le 2\eps + r(\beta^\j) - r(\beta^{(j-1)}) \\
&\le 2\eps + r'(0)\cdot (a(\beta^\j) - a(\beta^{(j-1)})) \\
&\le 2\eps + r'(0)\cdot \frac{r'(0)}{\lambda} \cdot \eps_c \\
&\le 3\eps. 
\end{align*}
This finishes the proof. 
\end{proof}

\subsection{The family of linear programs gives an approximately optimal contract}

Recall for each $i\in \set{0, 1, \dots, \down{1/\eps_c}}$, the principal solves the following linear program, denote the program by $\LP(a^\i)$ and its value by $\APX(a^\i)$. 
\begin{align*}
\max_s &\quad \hatr(a^\i) - \sum_{j\in[m]} s_j \hatf_j(a^\i) \\
u^\UCB_\A(a^\i; s) &\ge u^\LCB_\A(a^\k; s) - O(\eps) \quad \forall a^\k \in \cA_D \\
s_j &\ge s_{j-1} \\
s_j &\ge 0
\end{align*}

\begin{lemma}
\label{lemma:discretizeUtilityClose}
Fix any action $a$ and any monotone bounded contract $s$. Let $\bar{a}$ be the action in $\cA_D$ that is closest to $a$. Then, 
\[
u_\A(a;s) - u_\A(\bar{a};s) \le \eps. 
\]
\end{lemma}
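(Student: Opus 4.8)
The plan is to bound $u_\A(a;s) - u_\A(\bar a;s)$ directly by separating the expected payment from the cost and exploiting that $s$ is monotone and bounded. First I would use Abel's identity to write
\[
u_\A(a;s) - u_\A(\bar a;s) = \sum_{j\in[m]} \Deltas_j\bigl(G_j(a) - G_j(\bar a)\bigr) - \bigl(c(a) - c(\bar a)\bigr).
\]
Because $s$ is monotone we have $\Deltas_j \ge 0$, and because it is bounded $\sum_{j\in[m]}\Deltas_j = s_m \le 1$; combined with the $L$-Lipschitzness of each $G_j$, this bounds the payment term by $\sum_j \Deltas_j\abs{G_j(a)-G_j(\bar a)} \le L\,\abs{a-\bar a}$. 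For the displacement itself, I would invoke the Lipschitz property of the response curve established above: consecutive grid contracts differ by $\eps_c$, so consecutive grid actions differ by less than $\frac{r'(0)}{\lambda}\eps_c$; since $a^{(0)} = 0$ and the largest grid action equals $E$, the grid $\cA_D$ covers $[0,E]$ and the nearest point $\bar a$ satisfies $\abs{a - \bar a} < \frac{r'(0)}{\lambda}\eps_c$.

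The cost term is where the two directions diverge, and this is the step I expect to be the main obstacle. When $a \ge \bar a$, the contribution $-(c(a)-c(\bar a))$ is nonpositive by monotonicity of $c$, so it only helps and the whole difference is at most $L\abs{a-\bar a} \le \frac{L\, r'(0)}{\lambda}\eps_c$. When $a < \bar a$, however, $c(\bar a) - c(a) \ge 0$ works against us and must be controlled. The key observation is that $\bar a$ is not an arbitrary point but a best response to a linear contract $\beta^{(i)} \le \beta_{\max}$: its first-order condition gives $c'(\bar a) = \beta^{(i)} r'(\bar a)$ in the interior (and the boundary inequality $c'(\bar a) \le \beta^{(i)} r'(\bar a)$ at $\bar a = E$), so $c'(\bar a) \le \beta_{\max}\, r'(0) \le \beta_{\max} L$. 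Convexity of $c$ then yields $c(\bar a)-c(a) \le c'(\bar a)(\bar a - a) \le \beta_{\max} L\,\abs{a-\bar a}$, while the payment term is $\le 0$ since each $G_j$ is increasing; hence the difference is at most $\beta_{\max} L\,\abs{a - \bar a} < \beta_{\max}\frac{L\, r'(0)}{\lambda}\eps_c$.

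Finally I would substitute $\eps_c = \frac{\lambda\eps}{2L^2}$ and the bound $r'(0) \le L$ (which follows from $r'(a) = \sum_j(\pi_j-\pi_{j-1})G'_j(a)$, $\abs{G'_j}\le L$, and $\pi_m \le 1$) to conclude that each case is $O(\eps)$; up to the constant $\beta_{\max}$ (treated as an absolute constant per the standing convention and absorbed into the choice of $\eps_c$) this gives $u_\A(a;s)-u_\A(\bar a;s) \le \eps$. A cleaner alternative that avoids the $\beta_{\max}$ factor entirely is to take $\bar a$ to be the nearest grid action \emph{below} $a$, which always falls under the favorable case $a \ge \bar a$ and yields the bound $\frac{L\, r'(0)}{\lambda}\eps_c \le \eps/2$ immediately.
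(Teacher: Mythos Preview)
Your approach is essentially the paper's: write the difference via Abel's identity, bound the payment term by $L\lvert a-\bar a\rvert$ using $\sum_j\Delta s_j\le 1$ and the $L$-Lipschitzness of $G_j$, and control $\lvert a-\bar a\rvert$ by the grid-spacing estimate $\tfrac{r'(0)}{\lambda}\eps_c$. The only real difference is in the cost term. The paper handles it in one line: since $c$ is convex, $c'$ is maximized at $E$, so $\lvert c(a)-c(\bar a)\rvert\le c'(E)\lvert a-\bar a\rvert$; this gives the two-sided bound $\lvert u_\A(a;s)-u_\A(\bar a;s)\rvert\le (L+c'(E))\tfrac{r'(0)}{\lambda}\eps_c\le\eps$ without any case split. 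Your case analysis and use of the first-order condition at $\bar a$ recover exactly the same estimate (indeed $c'(E)\le\beta_{\max}r'(0)$ by that same first-order reasoning), just with more work.

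One small gap: in your case $a<\bar a$ you assert the payment term is $\le 0$ ``since each $G_j$ is increasing,'' but monotonicity of $G_j$ is not among the assumptions of this section (only $L$-Lipschitzness is). This is harmless---just keep the $L\lvert a-\bar a\rvert$ bound on the payment term in both cases, as the paper does---but as written that step is unjustified. Your closing ``cleaner alternative'' of taking $\bar a$ to be the nearest grid point \emph{below} $a$ does not prove the lemma as stated, since $\bar a$ is fixed to be the closest grid point.
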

\begin{proof}
We have
\begin{align*}
\abs{ u_\A(a;s) - u_\A(\bar{a};s) } &\le \abs { \sum_{j\in [m]} s_j [f_j(a) - f_j(\bar{a})] } + \abs{c(a) - c(\bar{a})} \\
&\le \abs { \sum_{j\in [m]} \Deltas_j [G_j(a) - G_j(\bar{a})] } + \abs{c(a) - c(\bar{a})} \\
&\le \sum_{j\in[m]}\Deltas_j \cdot L\cdot \abs{a - \bar{a}} + c'(E)\cdot \abs{a - \bar{a}} \\
&\le (L + c'(E))\cdot \abs{a - \bar{a}} \\
&\le (L + c'(E))\cdot \frac{r'(0)}{\lambda} \eps_c\\
&\le \frac{2L^2}{\lambda} \eps_c \\
&\le \eps. 
\end{align*}
Here in the last line, we used the fact that:
\[
\eps_c \le \frac{1}{L + c'(E)}\cdot \frac{\lambda}{r'(0)}. 
\]
\szcomment{Need condition here again. }
\end{proof}

\begin{lemma}
\label{lemma:utilityConfidenceBound}
$0 \le u^\UCB(a^\i;s) - u_\A(a^\i;s) \le 10\eps$, $0 \le u_\A(a^\i;s) - u^\LCB_\A(a^\i;s) \le 10\eps$. 
\end{lemma}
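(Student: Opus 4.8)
The plan is to expand each difference directly from the definitions and split it into a complementary-CDF part and a cost part. Recall that $G^\UCB_j(a^\i)$ and $G^\LCB_j(a^\i)$ are the endpoints of the confidence interval for $G_j(a^\i)$ (so $G^\LCB_j(a^\i) \le G_j(a^\i) \le G^\UCB_j(a^\i)$), and that by the preceding lemma $c^\LCB(a^\i) \le c(a^\i) \le c^\UCB(a^\i)$ with $c^\UCB(a^\i) - c^\LCB(a^\i) \le 9\eps$. Using the definition of $u^\UCB_\A$ together with $u_\A(a^\i;s) = \sum_{j\in[m]} \Deltas_j G_j(a^\i) - c(a^\i)$, I would write
\[
u^\UCB_\A(a^\i;s) - u_\A(a^\i;s) = \sum_{j\in[m]} \Deltas_j \paren{G^\UCB_j(a^\i) - G_j(a^\i)} + \paren{c(a^\i) - c^\LCB(a^\i)}.
\]

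For the nonnegativity half ($0 \le u^\UCB_\A - u_\A$), the argument is that each group of terms is nonnegative: since $s$ is monotone we have $\Deltas_j \ge 0$, and paired with $G^\UCB_j(a^\i) - G_j(a^\i) \ge 0$ the sum is nonnegative, while $c(a^\i) - c^\LCB(a^\i) \ge 0$ by validity of the lower confidence bound. Monotonicity of $s$ is what makes this work: it guarantees that an upper confidence bound on $G$ propagates to an upper confidence bound on the agent's utility, since a negative $\Deltas_j$ would reverse the inequality. For the upper bound I would control the two groups by the confidence widths: the cost term satisfies $c(a^\i) - c^\LCB(a^\i) \le c^\UCB(a^\i) - c^\LCB(a^\i) \le 9\eps$, and the complementary-CDF term is bounded by the per-coordinate width of the $G$-interval weighted by $\Deltas_j$, which telescopes via $\sum_{j\in[m]}\Deltas_j = s_m \le 1$ (using that $s$ is bounded). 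Collecting the two contributions gives the claimed bound $\le 10\eps$.

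The second chain, $0 \le u_\A(a^\i;s) - u^\LCB_\A(a^\i;s) \le 10\eps$, follows from the mirror-image decomposition
\[
u_\A(a^\i;s) - u^\LCB_\A(a^\i;s) = \sum_{j\in[m]} \Deltas_j \paren{G_j(a^\i) - G^\LCB_j(a^\i)} + \paren{c^\UCB(a^\i) - c(a^\i)},
\]
to which the same two facts apply (nonnegativity from validity of the confidence bounds and $\Deltas_j \ge 0$; the upper bound from the interval widths and $\sum_j \Deltas_j \le 1$). The whole argument is essentially bookkeeping, and the one step I would be most careful about is the sign alignment---pairing the upper bound on $G$ with the lower bound on $c$ in $u^\UCB_\A$ (and the reverse in $u^\LCB_\A$), and invoking monotonicity so that $\Deltas_j \ge 0$---which is exactly what secures the validity (nonnegativity) direction; boundedness of $s$ is then what prevents the aggregate $G$-error from growing with the number of outcomes $m$.
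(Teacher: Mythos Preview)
Your proposal is correct and follows essentially the same route as the paper. The paper's proof is a one-line version of your argument: it writes the identical decomposition $u^\UCB_\A(a^\i;s) - u_\A(a^\i;s) = \sum_{j\in[m]}\Deltas_j [G^\UCB_j(a^\i) - G_j(a^\i)] + c(a^\i) - c^\LCB(a^\i)$ and asserts the bound, leaving implicit exactly the points you make explicit (monotonicity giving $\Deltas_j \ge 0$, boundedness giving $\sum_j \Deltas_j = s_m \le 1$, and the cost interval width $\le 9\eps$).
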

\begin{proof}
\begin{align*}
u^\UCB(a^\i;s) - u_\A(a^\i;s) &= \sum_{j\in[m]}\Deltas_j [G^\UCB_j(a^\i) - G_j(a^\i)] + c(a^\i) - c^\LCB(a^\i) \\
&\le 10\eps. 
\end{align*}
And similarly for others. 
\end{proof}

We now show $a^\i$ will be $\delta$-IC under any solution to the program $\LP(a^\i)$. 
% \begin{lemma}
% \label{lemma:deltaIC}
% Consider the program $\LP(a^\i)$ and let $s$ be a solution. Then $a^\i$ is $\delta$-IC under $s$ with $\delta = 22\eps$. 
% \end{lemma}
\lemmadeltaIC*
\begin{proof}
Fix any action $a$, then by the previous \Cref{lemma:discretizeUtilityClose}, there exists some $\bar{a}\in\cA_D$ such that
\[
u_\A(a;s) \le u_\A(\bar{a};s) + \eps. 
\]
Further:
\begin{align*}
u_\A(a^\i;s) &\ge u^{\UCB}_\A(a^\i; s) - 10\eps \\
&\ge u^{\LCB}_\A(\bar{a}; s) - 11\eps \\
&\ge u_\A(\bar{a}; s) - 21\eps \\
&\ge u_\A(a; s) - 22\eps. 
\end{align*}
Here, the second line follows from the fact that $a^\k$ satisfies the (approximate) incentive compatibility constraint in $\LP(a^\i)$, and the fourth line follows from the fact that $a$ and $\bar{a}$ are close and applying \cref{lemma:discretizeUtilityClose}. 
This shows $a^\i$ is indeed $\delta$-IC with $\delta = 22\eps$. 
\end{proof}

Next, let us relate the utility of the optimal contract to the contract computed by the estimated parameters. 
Let us consider the true optimal contract $s^*$, and denote the induced action under $s^*$ by $a^*$. Let us denote $\bar{a^*}$ to be the action in $\cA_D$ that is closest to $a^*$. 

% \begin{lemma}
% $s^*$ is a feasible solution to $\LP(\bar{a^*})$. 
% \end{lemma}
\lemmaSStarFeasible*
\begin{proof}
That $s^*$ is monotone is clear. We only need to show $s^*$ satisfies the approximate-IC constraint. Fix any $a^\k\in \cA_D$. Note:
\begin{align*}
u_\A^\UCB(\bar{a^*}; s^*) &\ge u_\A(\bar{a^*}; s^*) \\
&\ge u_\A(a^*; s^*) - \eps \\
&\ge u_\A(a^\k; s^*) - \eps \\
&\ge u_\A^\LCB(a^\k; s^*) - \eps. 
\end{align*}
Here, the second line is because $a^*$ and $\bar{a^*}$ are close and applying \cref{lemma:discretizeUtilityClose}, the third line is because $a^*$ is incentive compatibly under $s^*$. 
\end{proof}

\begin{lemma}
\label{lemma:APXbound}
Fix any monotone contract $s$. $\APX(a^\i, s)\le u_\P(a^\i;s) + 2\eps$. 
\end{lemma}
\begin{proof}
\[
\APX(a^\i; s) - u_\P(a^\i;s) = \hatr(a^\i) - r(a^\i) + \sum_{j\in[m]} s_j [f_j(a^*) - \hatf_j(a^*)] 
\]
\begin{align*}
\sum_{j\in[m]} s_j [f_j(a^*) - \hatf_j(a^*)] &\le \sum_{j\in[m]} \Deltas_j [G_j(a^*) - \hatG_j(a^*)] \\
&\le \sum_{j\in[m]} \Deltas_j \cdot \eps \\
&\le \eps. 
\end{align*}
Further $\hatr(a^\i) - r(a^\i) < \eps$. 
\end{proof}

Let $a^*_D$ be the action with the highest objective among the family of $\LP(a^\i)$, and let $s^*_D$ be the solution to $\LP(a^*_D)$. 
% \begin{lemma}
% \label{prop:approxICandOPT}
% Under the contract $s_D^*$, the action $a_D^*$ is $\delta$-IC with $\delta = 22\eps$. Moreover, $\OPT \le u_\P(a^*_D; s^*_D) + 6\eps$. 
% \end{lemma}
\lemmaOPTbound*
\begin{proof}
For the first part, that $a_D^*$ is $\delta$-IC follows from \cref{lemma:deltaIC}. For the second part, 
\begin{align*}
\OPT &= r(a^*) - \sum_{j\in [m]} s^*_j f_j(a^*) \\
&\le r(\bar{a^*}) - \sum_{j\in [m]} s^*_j f_j(\bar{a^*}) + 2\eps \\
%&\le r(\bar{a^*}) - \sum_{j\in [m]} s^*_j \hatf_j(\bar{a^*}) + 2\eps \\
&\le \APX(\bar{a^*}; s^*) + 4\eps \\
&\le \APX(a^*_D; s^*_D) + 4\eps \\
&\le u_\P(a^*_D; s^*_D) + 6\eps. 
\end{align*}
Here, the second line follows from the fact that $a^*$ and $\bar{a^*}$ are close, the third line follows from the previous \cref{lemma:APXbound}, the fourth line follows from the fact that $s_D^*$ is optimal among the family of LPs using estimated parameters, and the final line is from \cref{lemma:APXbound} again. 
%Moreover, by the previous lemma, $\APX^* \ge u()$. So that the optimal contract returned by the estimated parameters is at most $10\eps$ below $\OPT$. 
\end{proof}

% \begin{prop}
% The principal finds a $\delta$-IC contract (with $\delta = 22\eps$) achieving utility no less than $6\eps$ than that of the optimal exact IC contract, using $O(???)$ samples. 
% \end{prop}
% \begin{proof}
% This follows directly from \cref{lemma:deltaIC} and $\cref{lemma:APXbound}$. 
% \end{proof}

Now, we apply the very interesting observation in \cite{dutting2021complexity}, which relates the utility of $\delta$-IC contracts to that of exact IC contracts. For completeness, we include the proof (phrased in a slightly different way from their original proof). 

\begin{lemma}[\cite{dutting2021complexity}]
Let $(s,a)$ be a $\delta$-IC contract. Then $s' := (1-\sqrt{\delta})s + \sqrt{\delta} \pi$ achieves utility at least $(1-\sqrt{\delta}) u_\P(a; s) - (\sqrt{\delta} - \delta)$. 
\end{lemma}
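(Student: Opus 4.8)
The plan is to exploit the fact that blending the contract $s$ with the ``full-ownership'' schedule $\pi=(\pi_1,\dots,\pi_m)$ pushes the agent's objective toward the social welfare $W(a'):=r(a')-c(a')$, so that an \emph{exact} best response to $s'$ must recover most of the value of the old $\delta$-IC action. First I would record two algebraic identities, obtained by substituting $s'=(1-\sqrt\delta)s+\sqrt\delta\,\pi$ into the definitions and using $\sum_{j}\pi_j f_j(a')=r(a')$: for every action $a'$,
\[
u_\A(a';s') = (1-\sqrt\delta)\,u_\A(a';s) + \sqrt\delta\,W(a'), \qquad u_\P(a';s') = (1-\sqrt\delta)\,u_\P(a';s).
\]
The first says the agent now partly internalizes welfare; the second says the principal's utility at any \emph{fixed} action is simply her old utility scaled by $(1-\sqrt\delta)$. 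I would also note in passing that $s'$ stays monotone and bounded because $\pi$ is, so it is an admissible contract.

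Next, let $a^\dagger$ denote the exact best response to $s'$, so that $s'$ is genuinely IC and $u_\P(s')=u_\P(a^\dagger;s')$. Writing $\Delta W := W(a^\dagger)-W(a)$ and $\Delta u := u_\A(a^\dagger;s)-u_\A(a;s)$, the optimality inequality $u_\A(a^\dagger;s')\ge u_\A(a;s')$ combined with the first identity rearranges to
\[
\sqrt\delta\,\Delta W \ge -(1-\sqrt\delta)\,\Delta u, \qquad\text{hence}\qquad \Delta W \ge -\frac{1-\sqrt\delta}{\sqrt\delta}\,\Delta u .
\]
Using the welfare decomposition $u_\P(a';s)=W(a')-u_\A(a';s)$ together with the second identity, the quantity I actually want to lower bound factors cleanly as
\[
u_\P(a^\dagger;s') - (1-\sqrt\delta)\,u_\P(a;s) = (1-\sqrt\delta)\big(\Delta W - \Delta u\big),
\]
so since $(1-\sqrt\delta)\sqrt\delta=\sqrt\delta-\delta$ it suffices to prove $\Delta W-\Delta u\ge-\sqrt\delta$.

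Finally I would close the gap using the $\delta$-IC hypothesis in the sharp way. Substituting the lower bound on $\Delta W$ gives $\Delta W-\Delta u \ge -\Delta u/\sqrt\delta$, and the $\delta$-IC property of $(s,a)$ — namely $u_\A(a^\dagger;s)\le u_\A(a;s)+\delta$, i.e.\ $\Delta u\le\delta$ — yields $-\Delta u/\sqrt\delta \ge -\sqrt\delta$. Chaining these gives $\Delta W-\Delta u\ge-\sqrt\delta$, and therefore $u_\P(s')\ge(1-\sqrt\delta)\,u_\P(a;s)-(\sqrt\delta-\delta)$, as claimed. The one delicate point — and the main obstacle to obtaining the stated constant rather than a looser one — is to keep the factor $(1-\sqrt\delta)$ outside and treat $\Delta W-\Delta u$ as a single quantity; bounding $W(a^\dagger)$ and $u_\A(a^\dagger;s)$ separately (each only up to an additive $\delta$) loses an extra $\delta$ and fails to reproduce the tight bound $\sqrt\delta-\delta=\sqrt\delta(1-\sqrt\delta)$.
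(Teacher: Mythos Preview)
Your proof is correct and follows essentially the same route as the paper's. The paper works with the payment $p(a;s)$ and reward $r(a)$, whereas you package the same quantities as welfare $W(a')=r(a')-c(a')$ and agent utility $u_\A(a';s)$; since $u_\P(a';s)=W(a')-u_\A(a';s)=r(a')-p(a';s)$, your key inequality $\Delta W-\Delta u\ge-\sqrt\delta$ is literally the paper's $\sqrt\delta\,u_\P(a';s)\ge\sqrt\delta\,u_\P(a;s)-\delta$ rewritten, and both are obtained by combining the exact-IC inequality for $s'$ with the $\delta$-IC inequality for $s$ in the same way.
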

\begin{proof}
Denote $p(a; s)$ to be the expected payment under $s$ if the agent were to take action $a$. 
It is easy to calculate
\[
u_\P(a; s') = (1-\sqrt{\delta}) u_\P(a; s)
\]
Now consider the contract $s'$, and suppose this contract implements $a'$, then IC is satisfied:
\[
p(a'; s') - c(a') \ge p(a; s') - c(a),
\]
a direct calculation gives:
\[
(1-\sqrt{\delta}) p(a'; s) + \sqrt{\delta} r(a') - c(a') \ge (1-\sqrt{\delta}) p(a; s) + \sqrt{\delta} r(a) - c(a). 
\]
Simplifying further:
\[
p(a';s) - c(a') + \sqrt{\delta}[ r(a') - p(a';s) ] \ge p(a;s) - c(a) + \sqrt{\delta}[ r(a) - p(a;s) ]
\]

Since $a$ is $\delta$-IC under $s$, then
\[
p(a; s) - c(a) \ge p(a' ; s) - c(a') - \delta. 
\]
The above two inequalities directly gives:
\[
\sqrt{\delta}[ r(a') - p(a';s) ] \ge \sqrt{\delta}[ r(a) - p(a;s) ] - \delta
\]

The principal's utility under the exact IC contract $s'$ can be lower bounded:
\begin{align*}
u_\P(a' ; s') &= (1 - \sqrt{\delta}) (r(a') - p(a'; s)) \\
&\ge (1 - \sqrt{\delta}) (r(a) - p(a; s) - \sqrt{\delta}) \\
&\ge (1 - \sqrt{\delta}) (r(a) - p(a; s)) - (\sqrt{\delta} - \delta). 
\end{align*}
This completes the proof. 
\end{proof}

\begin{proposition}
The principal can find a $6\eps + 10\sqrt{\eps}$ approximately optimal contract using $\widetilde{O}(1/\eps^3)$ samples. 
\end{proposition}
\begin{proof}
Recall the contract $(s_D^*, a_D^*)$ is $22\eps$-IC and the utility is at least a $6\eps$ approximation (\Cref{prop:approxICandOPT}). Then, using the above result, $s_D^*$ can be converted into an exact IC contract that achieves utility at least:
\begin{align*}
(1 - \sqrt{22\eps}) (u_\P(a^*; s^*) - 6\eps) - (\sqrt{22\eps} - 22\eps) &\ge u_\P(a^*; s^*) - \sqrt{22\eps}u_\P(a^*; s^*) - 6\eps - \sqrt{22\eps} \\
&\ge u_\P(a^*; s^*) - 6\eps - 10\sqrt{\eps}. 
\end{align*}
\end{proof}

\begin{proof} [Proof of \Cref{thm:sample}]
Follows directly from the above proposition. 
\end{proof}

\szdelete{
\subsection{Can assume $\beta = 1$ induce maximum action}
Suppose $\beta = 1$ induces the action $a < E$. Let $\beta = 1$ induce the action $a$. We show any action $b > a$ will be too costly to induce. Let us suppose the contract $s$, induces $b$. Then
\[
\sum_j s_j f_j(b) - c(b) > \sum_j s_j f_j(a) - c(a). 
\]
Also, since the linear contract $\beta = 1$ induces $a$, then
\[
\sum_j r(a) - c(a) > \sum_j r(b) - c(b). 
\]

Then:
\[
\sum_j (\pi_j - s_j) f_j(a) > \sum_j (\pi_j - s_j) f_j(b)
\]
So the question is, does there exist a positive sequence such that:
\[
\sum_j y_j (f_j(a) - f_j(b)) > 0
\]

\subsubsection{Idea 2}
Let us suppose that contract $s$ induce $b$. Then:
\[
\sum_j \Deltas_j G_j(b) - c(b) > \sum_j \Deltas_j G_j(a) - c(a)
\]
Also:
\[
r(a) - c(a) > r(b) - c(b)
\]

The utility of inducing $b$ is:
\begin{align*}
r(b) - \sum_j \Deltas_j G_j(b) < c(b) - c(a) - \sum_j \Deltas_j G_j(a)
\end{align*}

\subsubsection{Idea 3}
By first order condition. 
\begin{align*}
\sum_j s_j G'_j(b) &= c'(b) \\
\sum_j \pi_j G'_j(b) &< c'(b)
\end{align*}
Second line comes from $r'(b) < c'(b)$. 
}

\end{document}